\documentclass[aps,prl,singlecolumn,nofootinbib]{revtex4-2}

                   % Astronomical Journal

      % Acta Astronomica
             % Annual Review of Astron and Astrophys
                 % Astrophysical Journal
                % Astrophysical Journal, Letters
               % Astrophysical Journal, Supplement
           % Applied Optics
             % Astrophysics and Space Science
                % Astronomy and Astrophysics
          % Astronomy and Astrophysics Reviews
              % Astronomy and Astrophysics, Supplement
                 % Astronomicheskii Zhurnal
               % Bulletin of the AAS

                % Bulletin of the Astronomical Institutes of Czechoslovakia 

                % Chinese Astronomy and Astrophysics

                % Chinese Journal of Astronomy and Astrophysics
           % Icarus

                % Journal of Cosmology and Astroparticle Physics
             % Journal of the RAS of Canada
            % Memoirs of the RAS
             % Monthly Notices of the RAS
                % New Astronomy
          % New Astronomy Review
        % Physical Review A: General Physics
        % Physical Review B: Solid State
        % Physical Review C
        % Physical Review D
        % Physical Review E
    % Physical Review Letters
               % Publications of the Astron. Soc. of Australia
               % Publications of the ASP
               % Publications of the ASJ
%
                % Revista Mexicana de Astronomia y Astrofisica
             % Quarterly Journal of the RAS
             % Sky and Telescope
      % Solar Physics
      % Soviet Astronomy
     % Space Science Reviews
                 % Zeitschrift fuer Astrophysik
              % Nature
       % IAU Cirulars
 % Astrophysics Letters

                % Astrophysics Space Physics Research
 
                % Bulletin Astronomical Institute of the Netherlands
  % Fundamental Cosmic Physics
   % Geochimica Cosmochimica Acta
 % Geophysics Research Letters
      % Journal of Chemical Physics
    % Journal of Geophysics Research

                % Journal of Quantitiative Spectroscopy and Radiative Transfer

                % Mem. Societa Astronomica Italiana
   % Nuclear Physics A
   % Physics Reports
   % Physica Scripta
   % Planetary Space Science
   % Proceedings of the SPIE

\bibliographystyle{apsrev4-1}

\usepackage{amsmath,amstext}
\usepackage[T1]{fontenc}
\usepackage{amssymb}
\usepackage{graphicx}
\usepackage{ae,aecompl}

\usepackage{hyperref}
\usepackage{amsmath}
\usepackage{amssymb}
\usepackage{mathtools}
\usepackage{bm}
\usepackage{cleveref}
\usepackage{tensor}
\usepackage{braket}
\usepackage{enumitem}
\usepackage{mhchem}
\usepackage{cancel}
\usepackage{amsthm}
\usepackage{upgreek}
\usepackage{multirow}

\usepackage{color}

\newcommand{\spc}{\quad \quad \quad}

\def\be{\begin{equation}}
\def\ee{\end{equation}}
\def\beq{\begin{eqnarray}}
\def\eeq{\end{eqnarray}}

\theoremstyle{definition}

\theoremstyle{theorem}
\newtheorem{theorem}{Theorem}

%Marcelo's additions
\usepackage{soul}
\setstcolor{blue}

\begin{document}
\title{Universality Classes of Relativistic Fluid Dynamics: Foundations}
\author{L.~Gavassino$^1$, M.~Disconzi$^{1}$, \& J.~Noronha$^2$}
\affiliation{
$^1$Department of Mathematics, Vanderbilt University, Nashville, TN, USA
\\
$^2$Illinois Center for Advanced Studies of the Universe \& Department of Physics,
University of Illinois at Urbana-Champaign, Urbana, IL 61801-3003, USA
}

\begin{abstract}
A general organizing principle is proposed that can be used to derive the equations of motion describing the near-equilibrium dynamics of causal and thermodynamically stable relativistic systems. The latter are found to display some new type of universal behavior near equilibrium that allows them to be grouped into universality classes defined by their degrees of freedom, information content, and conservation laws. The universality classes expose a number of surprising equivalences between different theories, shedding new light on the near-equilibrium behavior of relativistic systems. 
\end{abstract} 
%
%
%Main message: It is possible to group relativistic hydrodynamic theories into ``universality classes'', where two theories belong to the same class if they are equivalent in the linear regime. Our result: In the presence of an information current, two theories belong to the same universality class if an only if there is a change of variables that maps the information current and the entropy production rate of the first theory into the information current and the entropy production rate of the second theory.
%
%Why is this beautiful?
%\begin{itemize}
%\item It explains why so many theories are the same in the linear regime.
%\item Once you know the conditions for linear stability and causality of one theory in the class, you know them for all theories in the class (see examples).
%\item You can group even substances into universality classes (see examples). 
%\item You can prove fun facts. For example: superfluid hydro coincides with Israel-Stewart heat conduction, when we send the heat conductivity to infinity.
%\end{itemize}
%
\maketitle
%\noindent
%\textit{Introduction}: 
\section{Introduction}
The construction of a relativistic hydrodynamic theory usually relies on two choices. First, one must choose the degrees of freedom that describe the hydrodynamic state. Then, a guiding principle is used to derive the dynamical equations for those fields. Such guiding principle may be, e.g., a thermodynamic principle \cite{Israel_Stewart_1979,Muller_book,Otting1_1998,Salazar2020,GavassinoFronntiers2021}, a variational principle \cite{noto_rel,Dubosky2012_effective,TorrieriIS2016,GavassinoRadiazione}, kinetic theory \cite{Denicol2012Boltzmann}, holography \cite{Bhattacharyya:2007vjd,Baier2008}, or effective theory arguments \cite{Bemfica2017TheFirst,Kovtun2019,Bemfica:2019knx,Hoult:2020eho,Bemfica-Disconzi-Graber-2021,BemficaDNDefinitivo2020}. The level of freedom involved in these two choices is immense \cite{Geroch_Lindblom_1991_causal}, leading to a plethora of alternative theories  
%(described by a variety of acronyms)
, which are constantly being added to the ``relativistic hydrodynamics landscape'' \cite{Israel_Stewart_1979,Baier2008,Denicol2012Boltzmann,Florkowski:2010cf,Martinez:2010sc,Jaiswal:2013vta,Bemfica2017TheFirst,Kovtun2019,Bemfica:2019knx,Hoult:2020eho,BemficaDNDefinitivo2020,Florkowski_2018,
RauWasserman2020,Kiamari_2021,SperanzaChiral2021,
NoronhaGeneralFrame2021,Perna_2021,
GavassinoKhalatnikov2022,KeYin2022,
Singh2022,Most_2022,
HellerSingulant2022,Brito_2022,Wagner:2022ayd,SalazarZannias2022}. 
%. This puts the predictive power of relativistic hydrodynamics as a whole into question. In fact, if we can construct (at least in principle) an infinite number of different theories to model the same system, and such theories lead to different predictions, then relativistic hydrodynamics may be effectively non-falsifiable \cite{Geroch2001}. This issue is becoming particularly pressing, now that so many new theories are being added to the ``hydrodynamic landscape'' \cite{Florkowski_2018,Kovtun2019,RauWasserman2020,Kiamari_2021,SperanzaChiral2021,NoronhaGeneralFrame2021,Perna_2021,GavassinoKhalatnikov2022,KeYin2022,Singh2022,Most_2022,HellerSingulant2022,Brito_2022}. 
Nevertheless, it is known that some theories can become equivalent in certain regimes. For instance, second-order theories for viscous hydrodynamics arising from very different choices of both fields and dynamical equations  \cite{Hishcock1983,OlsonLifsh1990,GerochLindblom1990,
carter1991,Baier2008,Denicol2012Boltzmann,
Stricker2019} are mathematically equivalent near equilibrium \cite{Liu1986,PriouCOMPAR1991,GavassinoNonHydro2022,GavassinoGENERIC2022}. Equivalency here means that, although those theories can be different in the nonlinear regime, they become indistinguishable when linearized around homogeneous equilibrium states. Demonstrating such equivalency sometimes requires making a complicated change of variables so that, almost ``magically''  (see, e.g.,  \cite{GavassinoGENERIC2022}), one linearized theory is transformed into the other.
The fact that such transformations are possible near equilibrium cannot be some fortuitous coincidence. Rather, this should follow as a consequence of the underlying properties of equilibrium states in relativity.

Here, we present a general organizing principle that can be used to derive the equations of motion (EOM) describing the near-equilibrium dynamics of any causal and thermodynamically stable relativistic system. The method is based on a new result, rigorously proven here, which establishes that the EOM describing the system's linearized disturbances can always be obtained from a 4-vector field $E^\mu$ known as the ``information current'' \cite{GavassinoCausality2021}, which is a quadratic function of the perturbations (e.g., temperature variations). Assuming an isotropic and homogeneous equilibrium state, we show how the information current can be systematically constructed for arbitrary theories in terms of the corresponding linear-order perturbation fields, grouped according to their transformation properties under the $SO(3)$ rotation group. Entropy production follows from $\partial_\mu E^\mu \leq 0$,  which describes the fact that our initial information about the microstate of the system \cite{Jaynes1965} is erased as all microstates evolve towards the equilibrium state. Furthermore, we prove that, no matter how complicated a hydrodynamic theory is, one can always rearrange its equations of motion so that they resemble Israel-Stewart theory \cite{Israel_Stewart_1979} near equilibrium. This is used to reveal that causal and thermodynamically stable theories of relativistic fluid dynamics possess a new type of universal behavior near equilibrium that allows them to be grouped into universality classes. Each class describes a physically different behavior, defined solely by the degrees of freedom, the corresponding information current, and the conservation laws. The existence of such universality classes unveils a number of startling equivalences between seemingly different sets of equations of motion. 
%Finally, in this paper we use our approach to determine the equations of motion that describe the near-equilibrium dynamics of elastic systems in relativity. 
Finally, as an application, we use our approach to show that (in the linear regime) an isotropic solid may be viewed as a fluid with an additional conserved charge, which transforms as a symmetric $(0,2)$-tensor under $SO(3)$.
Many more specific examples can be worked out for different universality classes (see accompanying paper). \emph{Notation}: We use $\hbar=k_B = c=1$, a $(-,+,+,+)$ Minkowski metric, Greek indices run from 0 to 3,  lowercase Latin indices from 1 to 3. Uppercase Latin indices are multi-index labels. $M_{(AB)}$ and $M_{[AB]}$ denote symmetric and antisymmetric parts of $M_{AB}$.

\section{Hyperbolicity from thermodynamics}
To understand how the universality classes come about, we need to derive some new results concerning the properties of the EOM of relativistic systems near equilibrium. We first recap some general facts about relativistic thermodynamics. All states of thermodynamic equilibrium maximize some thermodynamic potential $\Phi$ \cite{Stuekelberg1962,Israel_2009_inbook,Grmela1997,
huang_book,landau5,GavassinoTermometri}. For equations linearized about equilibrium, the quantity $E:=\Phi_\text{eq}-\Phi \geq 0$ plays the role of a non-increasing Lyapunov functional \cite{GavassinoGibbs2021}, which can be expressed as a volume integral $E(\Sigma) = \int_\Sigma E^\mu \, d\Sigma_\mu$, 
where $\Sigma$ is a Cauchy surface, $d\Sigma_\mu$ is its normal surface element (with orientation $d\Sigma_0>0$ \cite{MTW_book}), and $E^\mu$ is the information current \cite{GavassinoCausality2021} (see the Supplemental Material for a brief review). For the state of thermodynamic equilibrium to be stable against perturbations in all reference frames \cite{GavassinoSuperluminal2021,GavassinoBounds2023}, $E^\mu$ must be future-directed and timelike for any non-vanishing perturbation \cite{GavassinoCausality2021}. The second law of thermodynamics also requires 
\begin{equation}\label{Emu}
\partial_\mu E^\mu = -\sigma \leq 0 \, ,
\end{equation}
where $\sigma$ is the entropy production rate. We consider here the dynamics of linear deviations about thermodynamic equilibrium. Hence, let $\varphi^A(t,x^j)$  be some real linear-order perturbation fields that we use to characterize the system's state. Thus, $\varphi^A=0$ in equilibrium.  We now prove the following result: 
% \md{
% \begin{theorem}\label{theo}
% Consider the following system of PDEs in $\mathbb{R}^{1+3}$,
% \begin{equation}\label{prima}
% M^\mu_{AB}\partial_\mu \varphi^B = -\Xi_{AB} \varphi^B \, ,
% \end{equation}
% where $M^\mu_{AB}$ and $\Xi_{AB}$ are constant
% matrices and $M^0_{AB}$ is invertible. Assume 
% the following. (i) 
% There exist constant 
% symmetric matrices 
% $E^\mu_{AB}$ and $\sigma_{AB}$ such that
% \eqref{Emu} holds for any smooth solution of \eqref{prima}, 
% where $E^\mu$ and $\sigma$ are given by
% \begin{equation}\label{second}
% \begin{split}
%  E^\mu ={}& \dfrac{1}{2} E^\mu_{AB} \varphi^A \varphi^B \, ,  \\
%  \sigma ={}&  \sigma_{AB} \varphi^A \varphi^B \, . \\
% \end{split}
% \end{equation}
% (ii) For any vector $Z^A$, $E_{AB} Z^A Z^B$
% is future-directed timelike.
% (iii) Given arbitrary smooth initial data $\varphi^A(0,x^j)$
% for the system \eqref{prima}, there always exists a smooth 
% solution $\varphi^A$ taking the given data.
% Then, (I) equations \eqref{prima} are causal;
% (II) if $\varphi^A$ is a smooth solution to  
% \eqref{prima}, then it is also a solution to 
% \begin{equation}\label{cinque}
% E^\mu_{AB} \partial_\mu \varphi^B = -\sigma_{AB} \varphi^B - \Xi_{[AB]}\varphi^B \, . 
% \end{equation}
% \end{theorem}
% We postpone the proof until the end of the manuscript. Here, we make some important
% remarks about Theorem \ref{theo}.
% }
\begin{theorem}\label{theo}
Consider the following system of partial differential equations on $\mathbb{R}^{1+3}$,
\begin{equation}\label{prima}
M^\mu_{AB}\partial_\mu \varphi^B = -\Xi_{AB} \varphi^B \, ,
\end{equation}
where $M^\mu_{AB}$ and $\Xi_{AB}$ are constant
matrices, and $M^0_{AB}$ is invertible. Suppose that there exist constant symmetric matrices $E^\mu_{AB}$ and $\sigma_{AB}$ such that equation \eqref{Emu} holds for any smooth solution of \eqref{prima}, 
where $E^\mu$ and $\sigma$ are given by
\begin{equation}\label{second}
 E^\mu= \dfrac{1}{2} E^\mu_{AB} \varphi^A \varphi^B\,, \, \qquad  
 \sigma =  \sigma_{AB} \varphi^A \varphi^B \,,
\end{equation}
and $E^\mu$ is future-directed timelike (and hence non-vanishing) over the support of $\varphi^A$. Then, the system \eqref{prima} is causal, and it can be equivalently rewritten in a symmetric hyperbolic form as follows:
\begin{equation}\label{cinque}
E^\mu_{AB} \partial_\mu \varphi^B = -\sigma_{AB} \varphi^B - \Xi_{[AB]}\varphi^B \,. 
\end{equation} 
\end{theorem}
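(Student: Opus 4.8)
The plan is to exploit the hypothesis that $\partial_\mu E^\mu = -\sigma$ holds identically on solutions of \eqref{prima}, and to translate that into an algebraic identity relating $E^\mu_{AB}$, $\sigma_{AB}$, $M^\mu_{AB}$ and $\Xi_{AB}$. First I would compute $\partial_\mu E^\mu$ directly from \eqref{second}: since $E^\mu_{AB}$ is symmetric, $\partial_\mu E^\mu = E^\mu_{AB}\varphi^A \partial_\mu \varphi^B$. On a solution of \eqref{prima} one can (because $M^0_{AB}$ is invertible) solve for the time derivatives and, more usefully, contract \eqref{prima} appropriately; the cleanest route is to observe that the statement ``$\partial_\mu E^\mu = -\sigma$ for all solutions'' forces $E^\mu_{AB}\varphi^A\partial_\mu\varphi^B + \sigma_{AB}\varphi^A\varphi^B$ to vanish whenever $M^\mu_{AB}\partial_\mu\varphi^B + \Xi_{AB}\varphi^B = 0$. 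Treating $\varphi^A$ and $\partial_\mu\varphi^B$ as independent at a point (valid since for arbitrary prescribed values at a point there is a smooth solution with those Cauchy data, $M^0$ being invertible), this yields that $E^\mu_{AB}\varphi^A\partial_\mu\varphi^B$ must equal a combination of the constraints, i.e.\ there is a matrix-valued relation of the form $E^\mu_{AB} = (\text{something})\,M^\mu$ modulo the antisymmetric ambiguity. Carrying this out carefully gives precisely $E^\mu_{AB} = M^\mu_{(AB)}$ up to a symmetric part that must be accounted for, and $\sigma_{AB} = \Xi_{(AB)}$ (after symmetrization, since only the symmetric part of $\Xi$ contributes to the quadratic form $\sigma$).

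With that identification in hand, the rewriting \eqref{cinque} is immediate: starting from \eqref{prima}, $M^\mu_{AB}\partial_\mu\varphi^B = M^\mu_{(AB)}\partial_\mu\varphi^B + M^\mu_{[AB]}\partial_\mu\varphi^B$, and one shows the antisymmetric-in-$AB$ part of $M^\mu$ can be absorbed or shown to vanish by the same Lyapunov argument (the contribution of $M^\mu_{[AB]}$ to $\partial_\mu E^\mu$ is what pins it down); then $M^\mu_{(AB)} = E^\mu_{AB}$ and $\Xi_{(AB)} = \sigma_{AB}$, leaving $E^\mu_{AB}\partial_\mu\varphi^B = -\sigma_{AB}\varphi^B - \Xi_{[AB]}\varphi^B$, which is \eqref{cinque}. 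I would then verify symmetric hyperbolicity: \eqref{cinque} has the form $E^\mu_{AB}\partial_\mu\varphi^B = (\text{lower order})$ with $E^\mu_{AB}$ symmetric in $AB$ for each $\mu$, and with $E^0_{AB}$ positive-definite --- this is exactly where the hypothesis that $E^\mu$ is future-directed timelike over the support of $\varphi$ enters: $E^0 = \tfrac12 E^0_{AB}\varphi^A\varphi^B > 0$ for all nonzero $\varphi^A$ forces $E^0_{AB} \succ 0$. That is the definition of a symmetric hyperbolic system (Friedrichs), hence causality follows from the standard domain-of-dependence theorem for such systems.

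The main obstacle I anticipate is the step that nails down the algebraic relation between $E^\mu_{AB}$ and $M^\mu_{AB}$ cleanly, in particular justifying that one may treat the values of $\varphi^A$ and of all four derivatives $\partial_\mu\varphi^B$ as independent free data at a point. Because $M^0_{AB}$ is invertible, \eqref{prima} is not a constraint at $t=0$: given any $\varphi^A(0,x^j)$ one can prescribe spatial derivatives freely and then \eqref{prima} determines $\partial_0\varphi^B$, so the only relation among the $\partial_\mu\varphi^B$ at a point is \eqref{prima} itself. This means the identity ``$E^\mu_{AB}\varphi^A\partial_\mu\varphi^B + \sigma_{AB}\varphi^A\varphi^B = 0$ on the solution manifold'' is equivalent to $E^\mu_{AB}\varphi^A\partial_\mu\varphi^B + \sigma_{AB}\varphi^A\varphi^B \in \mathrm{span}\{ \varphi^A(\text{factor})\times(M^\nu_{AB}\partial_\nu\varphi^B + \Xi_{AB}\varphi^B)\}$, and matching the coefficients of the independent monomials $\varphi^A\partial_\mu\varphi^B$ and $\varphi^A\varphi^B$ gives the desired matrix equations. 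A secondary subtlety is that $\Xi_{AB}$ need not be symmetric, so $\Xi_{[AB]}$ genuinely survives on the right-hand side of \eqref{cinque}; this is consistent because $\Xi_{[AB]}\varphi^A\varphi^B = 0$, so the antisymmetric part never contributed to $\sigma$ and is unconstrained by the Lyapunov condition --- it simply reappears as a lower-order (non-dissipative) term, which does not affect either hyperbolicity or causality.
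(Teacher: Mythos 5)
Your overall strategy---impose $\partial_\mu E^\mu=-\sigma$ identically on the solution variety of \eqref{prima}, use the invertibility of $M^0_{AB}$ to argue that $\varphi^A$ and the spatial gradients $\partial_j\varphi^B$ are free data at a point, and match coefficients of the independent monomials---is sound and is essentially the paper's argument (the paper realizes the free pointwise data through the explicit exponential family \eqref{thevictory}, and its Supplemental Material gives a polarization version of your coefficient matching). The concrete algebraic identification you extract, however, is wrong. Matching the coefficient of $\varphi^A\partial_0\varphi^B$ in your ``span of the constraints'' relation forces the multiplier to be $\lambda=E^0(M^0)^{-1}$; the coefficients of $\varphi^A\partial_j\varphi^B$ then give $E^j=\lambda M^j$ (as matrices, exactly, not only up to symmetric parts), and those of $\varphi^A\varphi^B$ give $\sigma_{AB}=(\lambda\Xi)_{(AB)}$. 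The relation between $E^\mu$ and $M^\mu$ is therefore a \emph{left multiplication of the equation index by the invertible matrix $\lambda$}, not a symmetrization over $(A,B)$: in general $E^\mu_{AB}\neq M^\mu_{(AB)}$ and $\sigma_{AB}\neq\Xi_{(AB)}$, and there is no reason for $M^\mu_{[AB]}$ to ``vanish or be absorbed''---replacing $M^\mu_{AB}$ by $M^\mu_{(AB)}$ does not preserve the PDE system (e.g.\ rescale one row of a Cattaneo system by $2$: the hypotheses still hold with the same $E^\mu,\sigma$, but the symmetrized system has different solutions). The repair is precisely the step the paper makes explicit: first contract \eqref{prima} with an invertible matrix so as to normalize $M^0_{AB}=E^0_{AB}$ (possible because $E^0_{AB}$ is positive definite, hence invertible); only then does coefficient matching yield $M^j_{AB}=E^j_{AB}$ and $\Xi_{(AB)}=\sigma_{AB}$, the surviving antisymmetric remainder of the \emph{normalized} $\Xi$ being the $\Xi_{[AB]}$ appearing in \eqref{cinque}.

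Two smaller points. The existence of smooth solutions realizing arbitrary $\{\varphi^A(p),\partial_j\varphi^B(p)\}$ deserves a word of justification; the paper sidesteps this by exhibiting \eqref{thevictory}, for which $\varphi^A(0)=Z^A$ and $\partial_j\varphi^A(0)=a_jW^A$, and bilinearity of the identity lets these rank-one spatial gradients suffice. More importantly, for causality you invoke only $E^0_{AB}\succ0$. That gives symmetric hyperbolicity and hence a well-posed Cauchy problem, but it does not place the characteristics inside the light cone: a symmetric hyperbolic system with merely $E^0_{AB}\succ0$ can propagate superluminally. Causality requires the \emph{full} hypothesis that $E^\mu_{AB}Z^AZ^B$ is future-directed timelike for every $Z^A\neq0$ (equivalently, $n_\mu E^\mu_{AB}$ is definite for every timelike covector $n_\mu$), which is the Geroch--Lindblom criterion the paper cites.
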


% We postpone the proof until the end of the manuscript. Here, we make some important
% remarks about Theorem \ref{theo}.

% Suppose that the initial data $\varphi^A(0,x^j)$ can be specified freely (no constraints \footnote{In the presence of constraints, the theorem still applies, but one needs to redefine the fields $\varphi^A$ in such a way that they are all independent. The standard example is the shear stress tensor, which must be symmetric traceless, $\Pi^{[jk]}=\Pi\indices{^j _j}=0$. We cannot set $\varphi^A=\Pi^{jk}$, because the field components would not be independent, but we can construct $5$ independent fields $\varphi^A$ ($A=1,...,5$) such that $\Pi^{jk}=\Pi^{jk}(\varphi^A)$. See \cite{GavassinoCasmir2022} for more details.}), and that the linearised field equations are of first order,
 
%  Furthermore, suppose that equation 
 
%  which are taken symmetric (with no loss of generality). Then, if the theory is thermodynamically stable, the system \eqref{prima} can be recast in a symmetric-hyperbolic form as follows:

% Furthermore, the theory is causal. 
\begin{proof}
System \eqref{prima} admits smooth solutions of the form
\begin{equation}\label{thevictory}
    \varphi^A(t,x^j)=  \big(e^{-(M^0)^{-1} \Xi t} \big)\indices{^A_B} \, (Z^B-W^B) 
    + \big(e^{-(M^0)^{-1} (\Xi+M^j a_j) t} \big)\indices{^A_B} \, W^Be^{a_j x^j} \, , 
\end{equation}
for any real $Z^A,W^A,a_j=\text{const}$. Thus, $E^\mu_{AB}Z^AZ^B= 2E^\mu(0)$ must be timelike future directed for any $Z^A \neq 0$. In particular, $E^0_{AB}$ must be positive definite and, hence, invertible. Now, we always have the freedom to redefine the matrices $M^\mu_{AB}$ and $\Xi_{AB}$ by contracting both sides of \eqref{prima} with an invertible matrix $\mathcal{N}^A_C$. Let us then ``fix'' the matrix $M^0_{AB}$ to coincide with $E^0_{AB}$ (this is possible because both are invertible). If we contract \eqref{prima} with $\varphi^A$, and we plug \eqref{second} into \eqref{Emu}, we obtain the two equations below:
\begin{equation}\label{gringo}
\varphi^A M^\mu_{AB}\partial_\mu \varphi^B + \varphi^A \Xi_{AB} \varphi^B =0 \, , \spc  \varphi^A E^\mu_{AB} \partial_\mu \varphi^B + \varphi^A \sigma_{AB} \varphi^B =0 \, . 
\end{equation}
Both are respected along \textit{all} solutions of \eqref{prima}. If we subtract the second equation of \eqref{gringo} to the first, the terms $\varphi^A M^0_{AB}\partial_t \varphi^B$ cancel out (we have fixed $M^0_{AB}=E^0_{AB}$). Evaluating the result along \eqref{thevictory}, at $x^\mu=0$, we obtain
\begin{equation}\label{MjjjEjjj}
Z^A(M^j_{AB}-E^j_{AB})a_j W^B +Z^A (\Xi_{AB}- \sigma_{AB} ) Z^B =0 \, .
\end{equation}
Since this must be true for any choice of $Z^A$, $W^A$, and $a_j$, we obtain $M^j_{AB}=E^j_{AB}$ and $\Xi_{(AB)}= \sigma_{AB}$. We have recovered \eqref{cinque}. But the matrices $E^\mu_{AB}$ are symmetric \footnote{Note that, if in \eqref{second} we did not define $E^\mu_{AB}$ to be symmetric, in the second equation of \eqref{gringo} there would be $E^\mu_{(AB)}$ in place of $E^\mu_{AB}$, and we would eventually get $M^\mu_{AB}=E^\mu_{(AB)}$. Hence, the system would anyway be symmetric.}, and $E^0_{AB}$ is positive definite. Thus, the system \eqref{cinque} is symmetric hyperbolic and, since $E^\mu_{AB}Z^AZ^B$ is future-directed timelike for any $Z^A \neq 0$, it is also causal \cite{Geroch_Lindblom_1991_causal}.
\end{proof}
This theorem  
shows that thermodynamic stability in relativistic systems implies not only causality \cite{GavassinoCausality2021} but also symmetric hyperbolicity in the linear regime (provided that we have an information current). This is convenient given that, in most physical systems, symmetric-hyperbolicity follows directly from Onsager symmetry \cite{GavassinoCasmir2022}. Furthermore, by showing that the EOM are symmetric hyperbolic, we establish  that the initial value problem of all thermodynamically consistent theories 
linearized about homogeneous equilibrium
is well-posed   \cite{GerochPartial1996,Kato2009,CourantHilbert2_book}, i.e., given initial data, solutions to the equations always exist, are unique, and depend continuously on the data. 
The assumptions in Theorem \ref{theo} are quite general, encompassing an astounding number of different theories. For example, they are satisfied by all Israel-Stewart-like theories \cite{Israel_Stewart_1979,Hishcock1983,Baier2008,DMNR2012}, in an arbitrary hydrodynamic frame \cite{OlsonLifsh1990,BritoStability2020,
NoronhaGeneralFrame2021}, and with an arbitrary number of chemical species \cite{Almaalol2022}. They are also satisfied within Carter's multifluid theory \cite{carter1991,noto_rel,
GavassinoStabilityCarter2022}, GENERIC-based theories \cite{Otting2_1999,Otting3_1999,Stricker2019,
GavassinoGENERIC2022}, Geroch-Lindblom theories \cite{Geroch_Lindblom_1991_causal,LindblomRelaxation1996,
Geroch1995,GavassinoNonHydro2022}, and divergence-type theories \cite{Liu1986,GerochLindblom1990,
ZanniasDivergenceType2023}. Theorem \ref{theo} shows that the EOM of all of those theories can be found (and written in symmetric-hyperbolic form) in terms of the information current. Later in this paper we show how the information current can be systematically determined from symmetry arguments. In the Supplemental Material, we explore the origin of Theorem \ref{theo} in detail.

Finally, we remark that the hypotheses of Theorem \ref{theo} are violated by first-order theories \cite{Bemfica2017TheFirst,
Bemfica2019_conformal1,Bemfica-Disconzi-Graber-2021,
BemficaDNDefinitivo2020,Kovtun2019,GavassinoLyapunov_2020,DoreGavassino2022} because their regularized information current contains derivatives \cite{GavassinoInfoBDNK2024ufs}. Also, we note that in most hydrodynamic theories, apart from the ``holographic theories'' discussed in \cite{Heller2014,GavassinoNonHydro2022}, one finds $\Xi_{[AB]}=0$. Consequently, in this case, the linear field equations \eqref{cinque} are uniquely determined by the information current and the entropy production rate. We will assume this in the following and work with theories specified by the triplet $\{\varphi^A,E^\mu,\sigma \}$. 

In this context,  one can find conditions under which two seemingly different theories are, in reality, different manifestations of the same near-equilibrium physics. This is the content of:  
\begin{theorem}\label{theo2}
Let $\{\varphi^A, E^\mu,\sigma \}$ and $\{\tilde{\varphi}^C, \tilde{E}^\mu,\tilde{\sigma} \}$ be two linear theories, for which all the hypotheses of Theorem \ref{theo} hold, and such that $\Xi_{[AB]}=\tilde{\Xi}_{[CD]}=0$. Then, such theories are equivalent if and only if there is an invertible matrix $\mathcal{N}^A_C$ such that, for arbitrary $Z^C$,
\begin{equation}\label{potato}
E^\mu(\mathcal{N}^A_C Z^C)=\tilde{E}^\mu(Z^C) \,,\qquad   \sigma(\mathcal{N}^A_C Z^C)= \tilde{\sigma}(Z^C) \,.
\end{equation}
\end{theorem}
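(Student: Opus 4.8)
The plan is to use Theorem~\ref{theo} to strip each theory down to its quadratic data $\{E^\mu,\sigma\}$ and thereby recast the claim as a simultaneous-congruence fact for constant matrices. Recall that an equivalence of the two linear theories is an invertible linear redefinition of the perturbation fields, $\varphi^A=\mathcal{N}^A_C\,\tilde{\varphi}^C$, that carries the triplet $\{\varphi^A,E^\mu,\sigma\}$ onto $\{\tilde{\varphi}^C,\tilde{E}^\mu,\tilde{\sigma}\}$. Since the hypotheses of Theorem~\ref{theo} hold and $\Xi_{[AB]}=\tilde{\Xi}_{[CD]}=0$, equation~\eqref{cinque} shows that the two theories are governed respectively by \[ E^\mu_{AB}\partial_\mu\varphi^B=-\sigma_{AB}\varphi^B\,,\qquad \tilde{E}^\mu_{CD}\partial_\mu\tilde{\varphi}^D=-\tilde{\sigma}_{CD}\tilde{\varphi}^D\,, \] with all matrices constant and symmetric and $E^0_{AB},\tilde{E}^0_{CD}$ positive definite. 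In particular each theory is completely encoded by the symmetric matrices $(E^0_{AB},\dots,E^3_{AB},\sigma_{AB})$, and a redefinition $\mathcal{N}^A_C$ carries one onto the other exactly when it carries this list into the corresponding one; this also forces the two field multiplets to have equal dimension, as an invertible $\mathcal{N}^A_C$ requires.

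For the ``if'' direction, observe that symmetry of $E^\mu_{AB}$ and $\tilde{E}^\mu_{CD}$ lets polarization upgrade the scalar identities \eqref{potato} to the matrix congruences $\mathcal{N}^A_C E^\mu_{AB}\mathcal{N}^B_D=\tilde{E}^\mu_{CD}$ and $\mathcal{N}^A_C\sigma_{AB}\mathcal{N}^B_D=\tilde{\sigma}_{CD}$ for every $\mu$. Substituting $\varphi^B=\mathcal{N}^B_D\tilde{\varphi}^D$ into the first system above and then recombining the equations by contracting with the invertible matrix $\mathcal{N}^A_C$ reproduces $\tilde{E}^\mu_{CD}\partial_\mu\tilde{\varphi}^D=-\tilde{\sigma}_{CD}\tilde{\varphi}^D$ exactly; because $\mathcal{N}$ is constant and invertible this is a bijection between the smooth solution spaces, with inverse $(\mathcal{N}^{-1})^C_A$, so the theories are equivalent. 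The ``only if'' direction is the same computation read backwards: given an equivalence $\varphi^A=\mathcal{N}^A_C\tilde{\varphi}^C$, write $E^\mu=\tfrac12 E^\mu_{AB}\varphi^A\varphi^B$ in the new variables to get $\tfrac12 E^\mu_{AB}\mathcal{N}^A_C\mathcal{N}^B_D\,\tilde{\varphi}^C\tilde{\varphi}^D$, which by hypothesis equals $\tilde{E}^\mu=\tfrac12\tilde{E}^\mu_{CD}\tilde{\varphi}^C\tilde{\varphi}^D$; relabeling $\tilde{\varphi}^C\to Z^C$ gives the first relation in \eqref{potato}, and the same step applied to $\sigma$ gives the second.

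The step that warrants the most care --- and the one I would regard as the crux --- is the reduction in the first paragraph: one has to invoke Theorem~\ref{theo} together with the hypothesis $\Xi_{[AB]}=\tilde{\Xi}_{[CD]}=0$ to guarantee that nothing about a theory survives beyond $\{E^\mu,\sigma\}$, so that equivalence really does collapse to a simultaneous congruence of the matrices $(E^0_{AB},\dots,E^3_{AB},\sigma_{AB})$. One must also track the two roles of $\mathcal{N}$: on the quadratic forms it acts by the congruence $\mathcal{N}^A_C(\cdot)_{AB}\mathcal{N}^B_D$, whereas on the differential system the substitution $\varphi^B=\mathcal{N}^B_D\tilde{\varphi}^D$ is followed by a left contraction with the \emph{same} $\mathcal{N}^A_C$ --- admissible precisely because symmetry of $E^\mu_{AB}$ makes this contraction send a symmetric-hyperbolic system to another symmetric-hyperbolic system. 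It is this matching of a single $\mathcal{N}$ in both roles that makes the two conditions in \eqref{potato} sufficient, and not merely necessary, for equivalence; everything else is routine linear algebra.
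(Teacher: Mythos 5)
Your ``if'' direction is correct and is essentially the paper's own argument: polarization (using the symmetry of the matrices) turns \eqref{potato} into the congruences $E^\mu_{AB}\mathcal{N}^A_C\mathcal{N}^B_D=\tilde{E}^\mu_{CD}$ and $\sigma_{AB}\mathcal{N}^A_C\mathcal{N}^B_D=\tilde{\sigma}_{CD}$, and substituting $\varphi^B=\mathcal{N}^B_D\tilde{\varphi}^D$ into the system \eqref{cinque} and contracting with the invertible $\mathcal{N}^A_C$ reproduces the tilde system, so the two dynamics coincide under the field identification. That half needs no changes.

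The ``only if'' direction, however, is circular as written. You \emph{define} an equivalence to be an invertible redefinition that ``carries the triplet $\{\varphi^A,E^\mu,\sigma\}$ onto $\{\tilde{\varphi}^C,\tilde{E}^\mu,\tilde{\sigma}\}$,'' and then observe that the pullback of $E^\mu$ through $\varphi^A=\mathcal{N}^A_C\tilde{\varphi}^C$ ``by hypothesis equals $\tilde{E}^\mu$.'' With that definition the converse is a tautology, since matching the triplets is literally condition \eqref{potato}. The equivalence the theorem is about is dynamical: the two sets of field equations are the same equations in different variables. Under that reading, the substantive step of the converse is to show that coinciding dynamics force the quadratic forms to coincide, i.e., that $E^\mu_{AB}\mathcal{N}^A_C\mathcal{N}^B_D$ must equal $\tilde{E}^\mu_{CD}$ rather than merely being \emph{some} future-directed timelike quadratic current with the right divergence along solutions. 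The paper closes exactly this gap by invoking the uniqueness of the information current associated with a given linearized theory \cite{GavassinoCausality2021}: both the pulled-back $E^\mu$ and $\tilde{E}^\mu$ are information currents for the same dynamics, hence equal, and likewise for $\sigma$. Note also that the ``crux'' you single out --- that by Theorem \ref{theo} with $\Xi_{[AB]}=0$ nothing survives beyond $\{E^\mu,\sigma\}$ --- points in the wrong direction for this half: it says the triplet determines the dynamics, whereas the converse needs the dynamics to determine the triplet, which is precisely the uniqueness statement your proof never supplies.
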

\begin{proof}
Theorem \ref{theo} implies that the field equations of the two theories can be recast as
\begin{equation}\label{cinquantaAcinque}
E^\mu_{AB} \partial_\mu \varphi^B = -\sigma_{AB} \varphi^B \, ,\qquad 
\tilde{E}^\mu_{CD} \partial_\mu \tilde{\varphi}^D = -\tilde{\sigma}_{CD} \tilde{\varphi}^D \, .
\end{equation}
Suppose that an invertible matrix $\mathcal{N}^A_C$ that satisfies \eqref{potato} exists. Then $E^\mu_{AB}\mathcal{N}^A_C \mathcal{N}^B_D=\tilde{E}^\mu_{CD}$, and $\sigma_{AB}\mathcal{N}^A_C \mathcal{N}^B_D=\tilde{\sigma}_{CD}$. But this implies that if we contract the first equation of \eqref{cinquantaAcinque} with $\mathcal{N}^A_C$, and we make the replacement $\varphi^B=\mathcal{N}^B_D \tilde{\varphi}^D$, we obtain the second equation of \eqref{cinquantaAcinque}. Hence, the equations of the two theories are the same, just written using different variables. Vice versa, suppose that the two linear theories are the same theory. Then, since the information current is unique \cite{GavassinoCausality2021}, we must have that $E^\mu = \tilde{E}^\mu$, and $\sigma=\tilde{\sigma}$. But this is equivalent to saying that there is a one-to-one mapping $\varphi^A =\mathcal{N}^A_C \tilde{\varphi}^C$ for which equation \eqref{potato} holds, with $Z^C=\tilde{\varphi}^C$.
\end{proof}

Theorem \ref{theo2} can be employed to prove that many apparently different theories currently in use reduce to exactly the same theory close to equilibrium. For example, a fluid mixture of two chemical substances undergoing a chemical reaction is indistinguishable from the Israel-Stewart theory for bulk viscosity \cite{Causality_bulk}, close to equilibrium (see also \cite{BulkGavassino,Camelio2022,CamelioSimulations2022}). Several other examples are discussed in our companion paper.\\

%\noindent
%\textit{Israel-Stewart representation:} 
\section{Israel-Stewart representation}
In classical field theory, the existence of conservation laws is associated with the presence of a collection of currents $j_I^\mu$ (where $I$ is a new multi-index spanning the conserved quantities) with vanishing divergence $\partial_\mu j_I^\mu=0$ \cite{Peskin_book} (e.g., baryon number conservation). In linear fluid theories characterized by $\{\varphi^A,E^\mu,\sigma\}$ such as those considered here, such currents manifest themselves through the existence of a (constant) matrix $\mathcal{N}^A_I$ such that $\mathcal{N}^A_I \sigma_{AB}=0$.
%\begin{equation}\label{conserviamodibrutto}    \mathcal{N}^A_I \sigma_{AB}=0 \, .\end{equation}
In fact, if we contract the field equations of the theory, $E^\mu_{AB}\partial_\mu \varphi^B=-\sigma_{AB}\varphi^B$, with $\mathcal{N}^A_I$, the right-hand side vanishes, and we recover the equations $\partial_\mu j_I^\mu=0$, with
\begin{equation}\label{martina}
    j_I^\mu = \mathcal{N}^A_I E^\mu_{AB} \varphi^B \, .
\end{equation}
In the Supplementary Material, we prove the following useful result:
\begin{theorem}\label{theoLind}
Let $\{\varphi^A,E^\mu,\sigma \}$ be a linear theory for which all the hypotheses of Theorem \ref{theo} hold, and $\Xi_{[AB]}=0$. If the  conservation laws $\partial_\mu ( \mathcal{N}^A_I E^\mu_{AB} \varphi^B)=0$ are all independent, then there is a one-to-one change of variables $\varphi^A \rightarrow \{ \mu^I,\Pi^a \}$ such that $E^\mu$, $j_I^\mu$, and $\sigma$ take the form (all matrices below are constant)
\begin{equation}\label{decompose}
    E^\mu = \dfrac{1}{2} E^\mu_{IJ} \, \mu^I \mu^J + E^\mu_{Ib} \, \mu^I \Pi^b +\dfrac{1}{2} E^\mu_{ab} \, \Pi^a \Pi^b \, ,\qquad \qquad 
    j^\mu_I =  E^\mu_{IJ} \,  \mu^J + E^\mu_{Ib} \, \Pi^b  \, , \qquad \qquad
    \sigma = \sigma_{ab} \, \Pi^a \Pi^b \, ,  \\
\end{equation}
where $E^\mu_{IJ}$, $E^\mu_{ab}$, and $\sigma_{ab}$ are symmetric matrices. If $\mathcal{N}^A_I$ accounts for all conservation laws, then $\sigma_{ab}$ is invertible. 
\end{theorem}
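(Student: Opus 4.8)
The plan is to realize Theorem~\ref{theoLind} as a purely linear-algebraic change of basis on the $n$-dimensional space of perturbation fields: I would choose new variables in which the $m$ directions labeled by $I$ coincide with the directions $\mathcal{N}^A_I$ generating the conserved currents, so that the block decomposition \eqref{decompose} becomes essentially tautological. First I would invoke Theorem~\ref{theo} to put the field equations in the form $E^\mu_{AB}\partial_\mu\varphi^B=-\sigma_{AB}\varphi^B$ with $E^\mu_{AB}$, $\sigma_{AB}$ constant and symmetric and $E^0_{AB}$ positive definite. The hypothesis that the conservation laws $\partial_\mu(\mathcal{N}^A_I E^\mu_{AB}\varphi^B)=0$ are all independent translates, using invertibility of $E^0_{AB}$, into linear independence of the vectors $\mathcal{N}^A_I\in\mathbb{R}^n$ (as $I$ varies).

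Next I would complete $\{\mathcal{N}^A_I\}$ to a basis of $\mathbb{R}^n$ by adjoining vectors $Q^A_a$, $a=1,\dots,n-m$; assemble the invertible matrix whose column blocks are $[\,\mathcal{N}\mid Q\,]$; and take its inverse $L$ to define the one-to-one change of variables $(\mu^I,\Pi^a)=:\psi^{\bar A}=L^{\bar A}_B\varphi^B$, equivalently $\varphi^B=\mathcal{N}^B_I\mu^I+Q^B_a\Pi^a$. Substituting this expression for $\varphi^B$ into $E^\mu=\tfrac12 E^\mu_{AB}\varphi^A\varphi^B$, $\sigma=\sigma_{AB}\varphi^A\varphi^B$, and $j_I^\mu=\mathcal{N}^A_I E^\mu_{AB}\varphi^B$, I would simply read off the new coefficients: $E^\mu_{IJ}=\mathcal{N}^A_I E^\mu_{AB}\mathcal{N}^B_J$, $E^\mu_{Ib}=\mathcal{N}^A_I E^\mu_{AB}Q^B_b$, $E^\mu_{ab}=Q^A_a E^\mu_{AB}Q^B_b$, and $\sigma_{ab}=Q^A_a\sigma_{AB}Q^B_b$. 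All terms of $\sigma$ involving a $\mu^I$ carry a factor $\mathcal{N}^A_I\sigma_{AB}$, which vanishes by hypothesis, so $\sigma=\sigma_{ab}\Pi^a\Pi^b$; symmetry of $E^\mu_{IJ}$, $E^\mu_{ab}$, $\sigma_{ab}$ is inherited from that of $E^\mu_{AB}$, $\sigma_{AB}$; and the announced form $j_I^\mu=E^\mu_{IJ}\mu^J+E^\mu_{Ib}\Pi^b$ drops out of the same substitution, because in the new variables $\mathcal{N}^A_I$ is precisely the coordinate projection onto the $\mu$-block. (As a byproduct, contracting the field equations with the new basis reproduces $\partial_\mu j_I^\mu=0$ together with relaxation-type equations for the $\Pi^a$, i.e.\ the Israel--Stewart structure.)

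For the final claim I would argue by contradiction. Suppose $\sigma_{ab}v^b=0$ with $v\neq0$ and set $w^A=Q^A_a v^a$, so $Q^A_a\sigma_{AB}w^B=0$ for all $a$; together with $\mathcal{N}^A_I\sigma_{AB}w^B=0$ and invertibility of $[\,\mathcal{N}\mid Q\,]$ (hence of its transpose), this forces $\sigma_{AB}w^B=0$, i.e.\ $w$ lies in the kernel of $\sigma_{AB}$. If $\mathcal{N}^A_I$ accounts for all conservation laws — meaning its columns span that kernel — then $w$ is a linear combination of the $\mathcal{N}^A_I$; but $w=Q^A_a v^a$ as well, and linear independence of the full basis $[\,\mathcal{N}\mid Q\,]$ then forces $v=0$, a contradiction. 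Hence $\sigma_{ab}$ is invertible.

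I do not expect a serious obstacle here — the argument is elementary linear algebra. The only delicate points are (i) fixing the precise meaning of ``the conservation laws are all independent'' and ``$\mathcal{N}^A_I$ accounts for all conservation laws'' so that they match exactly the linear-independence and spanning conditions used above (the latter because, as noted in the text preceding the theorem, \emph{every} conserved current of such a theory comes from some $\mathcal{M}^A$ with $\mathcal{M}^A\sigma_{AB}=0$, so ``all conservation laws'' is the kernel of $\sigma_{AB}$), and (ii) keeping the multi-index bookkeeping consistent so that the completion $[\,\mathcal{N}\mid Q\,]$ genuinely turns $\mathcal{N}^A_I$ into the coordinate projection onto the $\mu$-block, which is what makes the stated form of $j_I^\mu$ come out with no residual terms.
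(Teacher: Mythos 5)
Your proposal is correct and follows essentially the same route as the paper's proof: linear independence of the $\mathcal{N}^A_I$, completion to a basis, substitution of $\varphi^A=\mathcal{N}^A_I\mu^I+Q^A_a\Pi^a$ into $E^\mu$, $j_I^\mu$, and $\sigma$, and a kernel argument for the invertibility of $\sigma_{ab}$. The only cosmetic difference is that the paper phrases the last step as ``$V^a\mathcal{N}^A_a$ would be a missing conservation law'' rather than your explicit span-intersection contradiction, but the content is identical.
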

Theorem \ref{theoLind} tells us that one can always rearrange the EOM so that their mathematical structure ``resembles'' Israel-Stewart theory. In fact, if $\sigma_{ab}$ is invertible, with matrix inverse $\sigma^{ab}$, we can express the field equations \eqref{cinque} in terms of the variables $\{ \mu^I,\Pi^a \}$ as
\begin{equation}\label{lindblomrepr}    
    \partial_\mu (E^\mu_{IJ} \,  \mu^J + E^\mu_{Ib} \, \Pi^b)= 0 \, ,\qquad
       \sigma^{ab} E^\mu_{bc} \,\partial_\mu \Pi^c +  \,\Pi^a  = -  \sigma^{ab} E^\mu_{Jb} \,\partial_\mu \mu^J \, ,
\end{equation}
which we will refer to as the ``Israel-Stewart representation'' of the theory. The first set of equations in \eqref{lindblomrepr} is the set of all conservation laws. The second set gives relaxation-type equations \cite{Jou_Extended}, which describe dissipation. Stability requires that $E^0_{ab}$ and $\sigma_{ab}$ be positive definite so that the second equation in \eqref{lindblomrepr} contains both $\partial_t \Pi^a$ and $\Pi^a$, breaking time-reversal invariance. Thus, one can interpret $\Pi^a$ as ``dissipative fields'' \cite{LindblomRelaxation1996} (e.g., viscous stresses, diffusive currents, and reaction affinities \cite{PrigoginebookModernThermodynamics2014}). The fields $\mu^I$ are the usual ``dynamical fluid fields'' \cite{LindblomRelaxation1996} (e.g., temperature, chemical potential, and flow velocity). Indeed, if $\Pi^a=0$ then  $2E^\mu=\mu^I j_I^\mu$, meaning that $\mu^I$ may be interpreted as the ``chemical potential'' of the conserved density  $j_I^0$   \cite{GavassinoCasmir2022,Pathria2011}. 
Therefore, thermodynamically stable relativistic theories can always be expressed in this Israel-Stewart representation specified by a choice of $\{ \mu^I,\Pi^a,E^\mu,\sigma \}$.

%"distinguishing the set of equations of motion associated with conservation laws (i.e., energy, momentum, or other global charge densities) from those that are not, given the inability of the former to equilibrate on microscopic time scales."

Further insight can be obtained by realizing that, due to rotational invariance of the equilibrium state, we may further decompose $\mu^I$ and $\Pi^a$ into irreducible tensors of the $SO(3)$ rotation group. For example, if among the fields there is a 4-current $\delta n^\mu$, then $\delta n^0$ behaves as a scalar under rotations, and $\delta n^j$ ($j=1,2,3$) behaves as a 3-vector. Hence, we can assign to a given theory a list of integers $(\mathfrak{g}_0,\mathfrak{g}_1,\mathfrak{g}_2,...)$ that specifies the geometric character of its degrees of freedom,  with  $\mathfrak{g}_0$ being the number of scalars, $\mathfrak{g}_1$ the number of vectors, $\mathfrak{g}_2$ the number of symmetric traceless tensors with two indices, and so forth. One can repeat the same procedure for the conservation laws, including in the count only the fields $\mu^I$, which transform under $SO(3)$ as the respective conserved densities $j_I^0$. This produces a second list of integers,  $(\bar{\mathfrak{g}}_0,\bar{\mathfrak{g}}_1,\bar{\mathfrak{g}}_2,...)$, where $\bar{\mathfrak{g}}_n \leq \mathfrak{g}_n$. From \eqref{lindblomrepr}, we see that the theory is non-dissipative if and only if $\bar{\mathfrak{g}}_n =\mathfrak{g}_n$, $\forall \, n$, i.e., if there are as many conservation laws as degrees of freedom. Therefore, one can fully specify a given theory by saying that it is ``of class $(\mathfrak{g}_0,\mathfrak{g}_1,\mathfrak{g}_2,...)-(\bar{\mathfrak{g}}_0,\bar{\mathfrak{g}}_1,\bar{\mathfrak{g}}_2,...)$''. 
For example, a perfect fluid at finite chemical potential is of class $(2,1)-(2,1)$, and it does not dissipate, whereas a bulk-viscous fluid at zero chemical potential is of class $(2,1)-(1,1)$, and it dissipates (here, it is understood that $\mathfrak{g}_n=\bar{\mathfrak{g}}_n=0$, for $n \geq 2$). 
%To do this, we need to set up a systematic classification of fluid models into different ``types of matter'' (i.e. into different rheological models \cite{Krusser_2020}). According to this classification, two materials are considered fundamentally different if they have structurally different degrees of freedom, or different conservation laws.

%\noindent
%\textit{Universality classes}:  
\section{Universality classes}
Pick a class, i.e. fix the values of $\mathfrak{g}_n$ and $\bar{\mathfrak{g}}_n$. Working in the Israel-Stewart representation, give some names to the fields $\mu^I$ and $\Pi^a$ and construct the most general expressions for $E^\mu$ and $\sigma$, of the form \eqref{decompose}, compatible with rotational invariance. This produces the most general theory of the given class. By Theorem \ref{theo2}, any other theory belonging to the same class must be a particular case of this general theory (for some specific choice of parameters) since a mapping of the form \eqref{potato} is guaranteed to exist ($E^\mu$ and $\sigma$ being the most general). 
This general theory is usually very complicated, as it possesses a plethora of free coefficients. Luckily, Theorem \ref{theo2} comes to our aid: one can reabsorb many transport coefficients through changes of variables, as this does not modify the dynamics of the system. A useful type of field redefinition is the ``change of hydrodynamic frame'': $\mu^I = \tilde{\mu}^I +\mathcal{R}^I_c \tilde{\Pi}^c$ and $\Pi^a = \mathcal{R}^a_c \tilde{\Pi}^c$ ($\mathcal{R}^I_c$ and $\mathcal{R}^a_c$ being constant matrices), which preserves the structure \eqref{decompose},  mapping Israel-Stewart representations into  Israel-Stewart representations. The goal is to find a transformation that maps the general theory into an already existing theory (whose physical interpretation is known), which plays the role of a ``representative'' of the class. If this happens, Theorem \ref{theo2} guarantees that any linear theory belonging to the class is a particular realization of the representative and exhibits the same physical behavior.

We have applied this method to some selected (parity invariant \cite{SperanzaChiral2021}) classes. The representatives are reported in Table \ref{tableI} (we have fixed $\mathfrak{g}_n=\bar{\mathfrak{g}}_n=0$, for $n>2$). An interesting pattern can be recognized. In the absence of vector conservation laws, the currents do not have inertia, and they can only diffuse. If we include one vector conservation law, which plays the role of the linear momentum, only then will the system behave like an actual fluid. If we include more vector conservation laws, the fluid can sustain multiple non-diffusive relative flows, and it behaves like a superfluid. If we include a tensor conservation law, the system can conserve the ``memory'' of the deformations it experiences, and it becomes elastic. Combine two vector conservation laws with one tensor conservation law, and the result is a superfluid-elastic system, i.e., a supersolid.
\begin{table}[h!]
\centering
\begin{tabular}{ | c |c c c| l| }
\hline
$\mathfrak{g}_0  \, \mathfrak{g}_1 \, \mathfrak{g}_2 $&$\bar{\mathfrak{g}}_0$ & $\bar{\mathfrak{g}}_1$ & $\bar{\mathfrak{g}}_2$ & Representatives \\
 \hline
 \hline
  $a \, \, \, \, 0 \, \,  \,\, 0$ & $\leq a$ & 0 & 0 & Chemistry \\ 
 \hline
  $a \, \, \, \, 1 \, \,  \,\, 0$ & $\leq a$ & 1 & 0 & Fluid mixture \cite{noto_rel,BulkGavassino}; Models for bulk viscosity \cite{Causality_bulk} \\ 
 \hline
$a \, \, \, \, a \, \,  \,\, 0$ &  $\leq a$ & $ \leq a$ & 0 & Carter multifluids \cite{Carter_starting_point,GavassinoStabilityCarter2022} \\ 
 \hline
  \multirow{3}{3em}{ $1 \, \, \, \, 1 \, \,  \,\, 0$ }  & 0 & 0 & 0 & Diffusion of a non-conserved density \\ 
    & 1 & 0 & 0 & Cattaneo model of diffusion \cite{cattaneo1958,Jou_Extended} \\ 
      & 1 & 1 & 0 & Perfect fluid at $\mu=0$; Barotropic perfect fluid \\ 
 \hline
  \multirow{2}{3em}{ $2 \, \, \, \, 1 \, \,  \,\, 0$ }  & 1 & 1 & 0 & Bulk viscous fluid at $\mu=0$ \\ 
      & 2 & 1 & 0 & Perfect fluid  \\ 
\hline
  \multirow{4}{3em}{ $2 \, \, \, \, 2 \, \,  \,\, 0$ } & 2 & 0 & 0 & Coupled diffusion of two conserved densities \cite{Onsager1931} \\ 
  & 1 & 1 & 0 & Heat conductive bulk viscous fluid at $\mu=0$ \\ 
    & 2 & 1 & 0 & Heat conductive fluid at $\mu \neq 0$ \cite{OlsonRegularCarter1990} \\ 
      & 2 & 2 & 0 & Relativistic superfluid \cite{cool1995} \\ 
 \hline
   \multirow{2}{3em}{ $1 \, \, \, \, 1 \, \,  \,\, 1$ }  & 1 & 1 & 0 & Maxwell material \cite{BaggioliHolography,GavassinoGENERIC2022} at $\mu=0$ \\ 
      & 1 & 1 & 1 & Elastic material at $\mu=0$ or at $T=0$ \cite{Schumaker1983,landau7}   \\ 
 \hline
$1 \, \, \, \, 1 \, \,  \,\, 2$ &  1 & 1 & 0 & Burgers material \cite{Malek2018} at $\mu=0$; MIS$^*$ \cite{KeYin2022} \\ 
 \hline
\multirow{4}{3em}{$3 \, \, \, \, 2 \, \,  \,\, 1$} &  1 & 1 & 0 & Israel-Stewart theory in a ``general frame'' \cite{NoronhaGeneralFrame2021} at $\mu = 0$  \\ 
&  2 & 1 & 0 & Israel-Stewart theory \cite{Israel_Stewart_1979,Hishcock1983}  at $\mu \neq 0$ \\ 
&  3 & 1 & 1 & Elastic heat conducting material  \\ 
&  3 & 2 & 1 & Supersolid \cite{Andreev1969,Sears2010}; Inner crust of neutron stars \cite{CarterSamuelsson2006}  \\ 
 \hline
\end{tabular}
\caption{Universality classes near equilibrium.}
\label{tableI}
\end{table}
%\onecolumngrid
%\twocolumngrid
We provide the information currents and entropy production rates of the theories listed in Table \ref{tableI} in the Supplemental Material. In our companion paper, we use those expressions to discuss in detail the equivalence between the theories in the most relevant classes. We consider here
%However, if we include more vector conservation laws, the fluid can sustain multiple non-diffusive relative flows behaving as a superfluid. On the other hand, if  instead we include a tensor conservation law, the system can conserve the ``memory'' of the deformations it experiences, becoming an elastic material. Combine two vector conservation laws and one tensor conservation law and the result is a superfuid-elastic system, i.e. a supersolid.
%Relativistic supersolid hydrodynamics
a system of class $(1,1,1)-(1,1,1)$. Its fields are $\mu^I=\{\delta \mu, \delta u^k,\delta \Pi^{kl} \}$, which may be interpreted as the perturbations to the chemical potential, flow velocity, and shear stress tensor (which is symmetric and traceless). Since we have as many degrees of freedom as conservation laws (i.e., there are no fields of ``$\Pi^a$ type''), the entropy production rate vanishes. Thus, the most general theory is
\begin{equation}\label{infoelas}
    TE^0 =\dfrac{1}{2}  \dfrac{d n}{d\mu} (\delta \mu)^2 + \dfrac{1}{2} (\rho {+} P)\delta u^k \delta u_k + \dfrac{\delta \Pi^{kl}\delta \Pi_{kl}}{4G} \, ,\qquad
    TE^j = n\delta \mu \delta u^j + \delta \Pi^{jk} \delta u_k \, ,\qquad
    T\sigma = 0 \,.
\end{equation}
This information current contains all possible terms allowed by symmetry, making this theory a valid representative of its class. The transport coefficients have been given a physically-motivated name to ease their interpretation: $\rho {+} P$ may be interpreted as the enthalpy density, $G$ as the shear modulus, and $n$ as the background density. The prefactor of $\delta \Pi^{jk} \delta u_k$ could be set to unity by appropriately rescaling $\delta \Pi^{jk}$ with a field redefinition of the form $\delta \Pi^{jk} \rightarrow a \, \delta \Pi^{jk}$\footnote{The case in which the prefactor of $\delta \Pi^{jk} \delta u_k$ equals zero is trivial since it would produce the field equation $\partial_t \delta \Pi_{kl}=0$.}. If we compute the field equations from equation \eqref{cinque}, we obtain
\begin{equation}\label{dayofrain}
     \partial_t \delta n + \partial_j (n \delta u^j)=0 \, , \qquad
     (\rho {+} P)\partial_t \delta u_k + \partial_k (n \delta \mu) + \partial_j \delta \Pi^j_k =0 \, , \qquad
     \dfrac{\partial_t \delta \Pi_{kl}}{2G} + \braket{\partial_k \delta u_l}=0 \, , 
\end{equation}
where $\braket{...}$ extracts the symmetric traceless part. To obtain the last equation,  one needs to account for the constraints on $\delta \Pi_{kl}$ (see \cite{GavassinoCasmir2022} for a detailed discussion). Equations \eqref{dayofrain} describe the dynamics of an isotropic elastic material at zero temperature. The first equation is the continuity equation for particles, the second is the conservation of momentum, the third incorporates shear-stress dynamics in the Hookean approximation \cite{Schumaker1983}. Combining all three equations, and using $dP/d\rho = n^2 d\mu/(\rho {+} P)dn$, we obtain
\begin{equation}\label{anotherdayofsun}
    \partial^2_t \delta u_k -  \dfrac{G}{\rho {+} P} \partial^j \partial_j \delta u_k -\bigg[\dfrac{dP}{d\rho} + \dfrac{G}{3(\rho {+} P)} \bigg] \partial_k \partial_j \delta u^j =0 \, ,
\end{equation}
which is consistent with standard formulas from the theory of elasticity \cite{landau7}. This describes an elastic material that can sustain both longitudinal and transversal propagating waves with speeds $c_L=\sqrt{\dfrac{dP}{d\rho} + \dfrac{4G}{3(\rho {+}P)}}$ and $c_T= \sqrt{\dfrac{G}{\rho {+} P}}$, respectively \cite{landau7}.
%\begin{equation}
 %       c_L^2= \dfrac{dP}{d\rho} + \dfrac{4G}{3(\rho {+}P)} \, ,\qquad
 %       c_T^2= \dfrac{G}{\rho {+} P} \, ,
%\end{equation}
%which match the corresponding formulas from the theory of elasticity \cite{landau7}.

%Let us now comment on the conservation laws. Besides particles and momentum, the class $(1,1,1)-(1,1,1)$ displays a symmetric traceless tensor conservation law which (for elastic media) is emergent, and is analogous to the conservation of superflow in a superfluid \cite{Termo,AndreevMelnikovsky2004}. To understand its physical origin, consider the following. An elastic medium has an ``unstrained'' state, which is attained at equilibrium. A deviation from equilibrium corresponds to a displacement of the material elements from such unstrained state. Let $\xi_k$ be the corresponding displacement vector. In the linear regime, $\partial_t \xi_k=\delta u_k$ \cite{Schumaker1983}, and the third equation of \eqref{dayofrain} becomes
%$\partial_t (\delta \Pi_{kl} + 2G \! \braket{\partial_k \xi_l})=0$. Thus,  Hooke's law, $\delta \Pi_{kl}=-2G \! \braket{\partial_k \xi_l}$, is recovered when we integrate in time the additional tensor conservation law, and it manifests the ability of the system to retain (or, better, \textit{conserve}) the memory of its past deformations. 
%The associated conserved charge is the total deformation: $\mathbb{D}_{kl}={-}\int \!  \delta \Pi_{kl} d^3 x/2G = \oint \!  \braket{\xi_k d^2 S_l}$.

%\noindent
%\emph{Conclusions:} 
\section{Conclusions}
Our results establish, for the first time, connections between elasticity, viscosity, superfluidity, supersolidity, diffusion, and chemistry within a systematic, fully relativistic formalism. Using this paper, one can easily write down the (linearized) EOM of an arbitrary relativistic hydrodynamic system with very limited knowledge about its behavior: one only needs to know the relevant degrees of freedom and conservation laws \footnote{We note that, in general, knowledge about dispersion relations is not enough to specify the dynamics of a system uniquely. In fact, causality cannot solely be determined from dispersion relations, see \cite{Gavassino:2023mad} and our companion paper. On the other hand, within our approach, given the couple $\{E^\mu,\sigma\}$, the equations of motion are unique, and causal by construction. Furthermore, when one specifies the numbers $(\mathfrak{g}_0,\mathfrak{g}_1,\mathfrak{g}_2,...)-(\bar{\mathfrak{g}}_0,\bar{\mathfrak{g}}_1,\bar{\mathfrak{g}}_2,...)$, the transformation laws of the fields and possible constraints (such as $\Pi_{jk}=\Pi_{kj}$) are implemented from the start. Hence, our method, in general, provides more information about the system's dynamics than a pure spectral analysis.}. Then, the most general information current can be constructed, from which the EOM can be derived in full analogy with action principles. Our method's three main advantages are: (a) The resulting EOMs are always causal, stable, thermodynamically consistent, and uniquely solvable for smooth initial data. (b) Given the degrees of freedom and the conservation laws, the associated theory is unique. (c) The information current uniquely determines also the fluctuating generalization of the theory, giving rise to well-posed stochastic dynamics in terms of a path integral \cite{Mullins:2023tjg}.

\section{Acknowledgements}
M.M.D. is partially supported by NSF grant DMS-2107701, a Vanderbilt's Seeding Success Grant, 
a Chancellor's Faculty Fellowship, and DOE grant DE-SC0024711.
L.G. is partially supported by a Vanderbilt's Seeding Success Grant. J.N. is partially supported by the U.S. Department of Energy, Office of Science, Office for Nuclear Physics
under Award No. DE-SC0023861.

\bibliography{Biblio}

\onecolumngrid
\newpage
\begin{center}
{\bf \large SUPPLEMENTARY MATERIAL}
\end{center}

\setcounter{equation}{0}
\setcounter{figure}{0}
\setcounter{table}{0}
\setcounter{page}{1}
\renewcommand{\theequation}{S\arabic{equation}}
\renewcommand{\thefigure}{S\arabic{figure}}

\maketitle

\section{Origin and properties of the information current}\label{sectionI}

Here, we briefly review how the information current is explicitly computed for an assigned (fully non-linear) theory. The foundations of the method where laid out in \cite{GavassinoGibbs2021} and refined in subsequent works \cite{GavassinoCausality2021,
GavassinoStabilityCarter2022,GavassinoGENERIC2022}. 

The equilibrium statistical operator of a fluid in contact with a large environment is $\hat{\rho}\propto \exp(\alpha_\star^I \hat{Q}_I)$, where $\hat{Q}_I$ are the Noether charges (energy, momentum, electric charge,...) of the underlying microscopic theory, and $\alpha_\star^I$ are some constants which are uniquely determined by the thermodynamic state of the environment. The operator $\hat{\rho}$ defines the grand-canonical ensemble of a system with arbitrary conservation laws \cite{Hakim2011}. Let $\psi^A$ be the hydrodynamic fields of the (fully non-linear) hydrodynamic theory, and consider a specific field configuration $\psi^A(\Sigma)$  on a given Cauchy surface $\Sigma$. This configuration constitutes a possible choice of initial data for the system, and defines a macroscopic state, with many possible microscopic realizations. Let $\hat{\mathcal{P}}[\psi^A(\Sigma)]$ be the projector onto the Hilbert subspace of all the possible microscopic realizations of $\psi^A(\Sigma)$. The entropy of $\psi^A(\Sigma)$ is given by Boltzmann's rule: $\exp S[\psi^A(\Sigma)]=\text{Tr} \hat{\mathcal{P}}[\psi^A(\Sigma)] $. Therefore, the equilibrium probability of observing the hydrodynamic state $\psi^A(\Sigma)$ is
\begin{equation}
\mathcal{P} \propto \text{Tr}\big(\hat{\mathcal{P}}[\psi^A(\Sigma)]\exp(\alpha_\star^I \hat{Q}_I) \big) \approx \text{Tr}\big(  \hat{\mathcal{P}}[\psi^A(\Sigma)] \big) \exp(\alpha_\star^I Q_I[\psi^A(\Sigma)]) = \exp(S[\psi^A(\Sigma)]+\alpha_\star^I Q_I[\psi^A(\Sigma)]) \, .
\end{equation}
Clearly, the equilibrium macrostate is the most probable macrostate. We can therefore conclude that the equilibrium state of the hydrodynamic theory must be the state which maximizes (for unconstrained variations) the functional
\begin{equation}
\Phi[\psi^A]= S[\psi^A] + \alpha_\star^I Q_I[\psi^A] \, ,
\end{equation} 
for fixed values of all $\alpha_\star^I$ (which depend on the environment). Then, assuming that the theory has an entropy current $s^\mu(\psi^A)$, we need to maximize the flux-integral (we adopt the ``standard orientation'' \cite{MTW_book}: $d\Sigma_0 >0$)
\begin{equation}\label{phiuzzo}
\Phi= \int_\Sigma (s^\mu +\alpha_\star^I J^\mu_I)d\Sigma_\mu  \, ,
\end{equation}
where $J^\mu_I(\psi^A)$ are the conserved currents of the theory. In order to actually compute the maximum of \eqref{phiuzzo}, we introduce a smooth one-parameter family, $\psi^A(\epsilon)$, of solutions of the field equations such that $\epsilon=0$ is the equilibrium state. Then, we impose $\dot{\Phi}(\epsilon=0)=0$ for any possible one-parameter family of this kind (we adopt the notation $\dot{f}:=df/d\epsilon$). This immediately leads to the covariant Gibbs relation \cite{Israel_2009_inbook},
\begin{equation}\label{Gibbsusss}
\dfrac{\partial s^\mu}{\partial \psi^A} = -\alpha_\star^I \dfrac{\partial J^\mu_I}{\partial \psi^A} \, ,
\end{equation}
which fixes the equilibrium state uniquely. Then, if we define $E:=\Phi(0)-\Phi(\epsilon)$, the condition that $\Phi$ should be maximized in equilibrium is equivalent to the requirement that $E$ should be strictly positive for non-vanishing perturbations. Performing a second-order Taylor expansion in $\epsilon$, and recalling equation \eqref{Gibbsusss}, we obtain\footnote{Note that the terms proportional to $\ddot{\psi}^A$ in $E$ cancel out as a consequence of \eqref{Gibbsusss}. }
\begin{equation}
E = \int_\Sigma -\dfrac{1}{2}\bigg[\dfrac{\partial^2 s^\mu}{\partial \psi^A \partial \psi^B} + \alpha_\star^I\dfrac{\partial^2 J^\mu_I}{\partial \psi^A \partial \psi^B}  \bigg] \varphi^A \varphi^B d\Sigma_\mu +\mathcal{O}(\epsilon^3) \geq 0  \, .
\end{equation}
The quantity inside the integral is $E^\mu$, and we are adopting the notation $\varphi^A:= \psi^A(\epsilon)-\psi^A(0)=\dot{\psi}^A(0) \epsilon+\mathcal{O}(\epsilon^2)$. If we require $E$ to be positive-definite for all choices of $\Sigma$, we find that $E^\mu$ is future-directed timelike. This automatically enforces hydrodynamic stability (in the linear regime), if the theory is consistent with the second law of thermodynamics. To see this, consider that, if no interaction occurs between the fluid and the environment, then $\Delta S \geq 0$ and $\Delta Q_I=0$ along any hydrodynamic process. Therefore, $\Delta \Phi=\Delta S \geq 0$ (recall that $\alpha_\star^I$ are constants). It follows that $\Phi(\epsilon)$ can only increase, or stay constant. Considering that $\Phi(0)$ is the equilibrium value of $\Phi$ (which is a constant), we can conclude that $E=\Phi(0)-\Phi(\epsilon)$ is non-increasing in time. In summary, $E$ plays the role of a positive-definite square-integral norm of the perturbations, which is bounded from below by zero and from above by its initial value. This guarantees (Lyapunov) stability of the equilibrium and global uniqueness of the solution for square-integrable initial data, in the linear regime \cite{Wald}. With a simple Gauss-theorem argument, one can also prove causality \cite{GavassinoCausality2021}.

\newpage

\section{Understanding Theorem 1} 

Theorem 1 is quite a surprising result, as it tells us that the very existence of the information current fully constrains the mathematical structure of the field equations. How is it possible? Here, we present an alternative proof of Theorem 1, which should help the reader understand what is really going on. We also provide an explicit example.

\subsection{Proof from linearity}

It is possible to prove Theorem 1 without ever manipulating the field equations ``$M^\mu_{AB}\partial_\mu \varphi^B = -\Xi_{AB}\varphi^B$'' explicitly, but only relying on their linearity. Let us see how this works. Consider two arbitrary solutions, $\varphi^A$ and $\tilde{\varphi}^A$, of the field equations. Then, since $\partial_\mu E^\mu =-\sigma$ holds along all solutions of the field equations, we must have\footnote{Note that, if we did not define $E^\mu_{AB}$ to be symmetric, in equation \eqref{pippo} there would be $E^\mu_{(AB)}$ in place of $E^\mu_{AB}$.} (recalling that $2E^\mu=E^\mu_{AB}\varphi^A \varphi^B$ and $\sigma=\sigma_{AB}\varphi^A \varphi^B$)
\begin{equation}\label{pippo}
\begin{split}
& \varphi^A E^\mu_{AB}\partial_\mu \varphi^B +\varphi^A \sigma_{AB}\varphi^B =0 \, , \\
& \tilde{\varphi}^A E^\mu_{AB}\partial_\mu \tilde{\varphi}^B +\tilde{\varphi}^A \sigma_{AB}\tilde{\varphi}^B =0 \, . \\
\end{split}
\end{equation}
Furthermore, since the equations are linear, also $\varphi^A+  \tilde{\varphi}^A$ must be a solution of the field equations, and $\partial_\mu E^\mu =-\sigma$ (which is obeyed by \textit{all} solutions) must hold also along $\varphi^A+  \tilde{\varphi}^A$. Therefore:
\begin{equation}\label{varuzzo}
(\varphi^A+ \tilde{\varphi}^A) E^\mu_{AB}\partial_\mu (\varphi^B+ \tilde{\varphi}^B) +(\varphi^A+ \tilde{\varphi}^A) \sigma_{AB}(\varphi^B+ \tilde{\varphi}^B) =0 \, .
\end{equation}
Expanding all the products, and using \eqref{pippo} to cancel some terms, we obtain
\begin{equation}\label{iotiuso}
\tilde{\varphi}^A E^\mu_{AB}\partial_\mu \varphi^B +\varphi^A E^\mu_{AB}\partial_\mu \tilde{\varphi}^B +2 \, \tilde{\varphi}^A \sigma_{AB}\varphi^B  =0 \, .
\end{equation}
Now we immediately see what is happening: since $E^\mu$ is quadratic in the fields, and the second law $\partial_\mu E^\mu =-\sigma$ must hold for all linear combinations of solutions, there are some ``cross-conditions'' on all couples of solutions $\{\varphi^A, \tilde{\varphi}^A \}$, and this induces very strong constraints on the structure of the field equations. For example, if at some point $p$ we have $\varphi^A(p)=0$, then equation \eqref{iotiuso} implies that $E^\mu_{AB}\partial_\mu \varphi^B=0$ at $p$. We can already see that the matrices $E^\mu_{AB}$ must be somehow related to the matrices $M^\mu_{AB}$ defining the principal part of the field equations. Our goal now is to show that we can fully ``reconstruct'' the field equations of the theory just using \eqref{iotiuso}. The idea is the following.

Suppose that the multi-index $A$ runs from $1$ to $\mathfrak{D}$. Pick an arbitrary spacetime event $p$ and construct a collection of $\mathfrak{D}$ local solutions\footnote{Note that the existence of such local solutions is guaranteed by the Cauchy-Kovalevskaya (CK) theorem \cite{Rauch_book,CourantHilbert2_book}. In fact, the invertibility of $M^0_{AB}$ guarantees that the equations can be recast in the normal form, and linear equations with constant coefficients obviously have analytic coefficients, so that the CK theorem applies. The exponential solutions that we present in the main text are indeed an example of analytic solutions. It is also easy to show that, if we release the assumption that $M^0_{AB}$ is invertible, the proof no longer works. For example, take the case $M^\mu_{AB}=0$, $\Xi_{AB}=\delta_{AB}$. Then the only solution is $\varphi^A=0$, and there exists no solution $\tilde{\varphi}^A$ such that $\tilde{\varphi}^A(p)=\delta^A_C$.}, $\{\tilde{\varphi}^A_{(C)} \}_{C=1}^\mathfrak{D}$, of the field equations, such that $\tilde{\varphi}^A_{(C)}(p)=\delta^A_C$. Then, since $\tilde{\varphi}^A_{(C)}$ are solutions of the field equations, equation \eqref{iotiuso} must hold also replacing $\tilde{\varphi}^A$ with $\tilde{\varphi}^A_{(C)}$. This produces a collection of $\mathfrak{D}$ conditions at $p$:
\begin{equation}
E^\mu_{CB}\partial_\mu \varphi^B +\varphi^A E^\mu_{AB}\partial_\mu \tilde{\varphi}^B_{(C)} +2 \, \sigma_{CB}\varphi^B  =0 \spc (\text{at }p).
\end{equation}
Defining $\mathcal{G}_{CA}:= [E^\mu_{AB}\partial_\mu \tilde{\varphi}^B_{(C)}]_p$ (the subscript $p$ means ``evaluated at $p$''), and isolating $E^\mu_{CB}\partial_\mu \varphi^B$, we obtain
\begin{equation}\label{almost}
E^\mu_{CB}\partial_\mu \varphi^B =-(2 \, \sigma_{CB}+\mathcal{G}_{CB})\varphi^B   \spc (\text{at }p).
\end{equation}
On the other hand, we know that the linearized theory is invariant under spacetime translations (because the background state is homogeneous). Hence, we can repeat the same procedure above at any other spacetime event $q$, operating a translation $\tilde{\varphi}^A_{(C)}(x)\rightarrow \tilde{\varphi}^A_{(C)}(x-q+p)$, so that now they reduce to $\delta^A_C$ at $q$. This implies that the condition \eqref{almost} must hold at every spacetime point, with the same $\mathcal{G}_{CB}$. Furthermore, if we plug \eqref{almost} into the first line of \eqref{pippo}, we find that the matrix $\sigma_{CB}+\mathcal{G}_{CB}$ must be anti-symmetric. Hence, we call it $\Xi_{[CB]}$, and \eqref{almost} finally becomes
\begin{equation}\label{finally}
E^\mu_{CB}\partial_\mu \varphi^B =-\sigma_{CB} \varphi^B -\Xi_{[CB]}\varphi^B \spc (\text{at any event }q) \, .
\end{equation}
Since $E^\mu_{CB}$ are all symmetric, and $E^\mu_{CB}\varphi^C \varphi^B=2E^\mu$ is future directed timelike $\forall \, \varphi^B\neq 0$, this is a  symmetric hyperbolic and  causal \cite{Geroch_Lindblom_1991_causal} system of $\mathfrak{D}$ partial differential equations in $\mathfrak{D}$ variables. Holmgren's uniqueness theorem \cite{Rauch_book} guarantees that such system uniquely fixes the fields for given initial data. Hence, it must necessarily contain as much information as the original field equations (which were not even invoked explicitly!). This completes our alternative proof of Theorem 1.

\subsection{Example: Deriving Maxwell's equations from  energy conservation}

Suppose that one has forgotten the two ``dynamical'' Maxwell equations in vacuum, namely $\partial_t \textbf{E}=\nabla \times \textbf{B}$, and $\partial_t \textbf{B}=-\nabla \times \textbf{E}$. However, one remembers that they are linear, and that they obey a conservation law of the form $\partial_t \rho + \nabla \cdot \textbf{S}=0$, where $\rho$ is the electromagnetic energy density and $\textbf{S}$ is the Poynting vector:
\begin{equation}
\rho = \dfrac{|\textbf{E}|^2 + |\textbf{B}|^2}{2} \, , \spc \textbf{S}= \textbf{E} \times \textbf{B} \, .
\end{equation}
Can one reconstruct the original equations? Let's see. We consider two arbitrary solutions of the (unknown) Maxwell's equations. We call them $\{\textbf{E},\textbf{B} \}$ and $\{ \tilde{\textbf{E}},\tilde{\textbf{B}} \}$. Then, the conservation law $\partial_t \rho + \nabla \cdot \textbf{S}=0$ implies
\begin{equation}\label{eormf}
\begin{split}
& \textbf{E} \cdot \partial_t \textbf{E} + \textbf{B} \cdot \partial_t \textbf{B} + \textbf{B} \cdot (\nabla \times \textbf{E})- \textbf{E} \cdot (\nabla \times \textbf{B}) =0 \, , \\
& \tilde{\textbf{E}} \cdot \partial_t \tilde{\textbf{E}} + \tilde{\textbf{B}} \cdot \partial_t \tilde{\textbf{B}} + \tilde{\textbf{B}} \cdot (\nabla \times \tilde{\textbf{E}})- \tilde{\textbf{E}} \cdot (\nabla \times \tilde{\textbf{B}}) =0 \, .\\
\end{split}
\end{equation}
This does not fix the evolution of $\textbf{E}$ and $\textbf{B}$ completely. However, let us now consider the sum $\{\textbf{E}+\tilde{\textbf{E}},\textbf{B}+\tilde{\textbf{B}} \}$. Since we know that Maxwell's equations in vacuum are linear, we know that this is also a solution. Hence, the conservation law $\partial_t \rho + \nabla \cdot \textbf{S}=0$ must hold also for the total energy density and flux:
\begin{equation}
\rho = \dfrac{|\textbf{E}+\tilde{\textbf{E}}|^2 + |\textbf{B}+\tilde{\textbf{B}}|^2}{2} \, , \spc \textbf{S}= (\textbf{E}+\tilde{\textbf{E}}) \times (\textbf{B}+\tilde{\textbf{B}}) \, .
\end{equation}
This produces the cross condition below:
\begin{equation}\label{gbuf}
\tilde{\textbf{E}} \cdot (\partial_t \textbf{E}-\nabla \times \textbf{B}) + \tilde{\textbf{B}} \cdot (\partial_t \textbf{B}+\nabla \times \textbf{E} ) +\textbf{E} \cdot (\partial_t \tilde{\textbf{E}}-\nabla \times \tilde{\textbf{B}}) + \textbf{B} \cdot (\partial_t \tilde{\textbf{B}}+\nabla \times \tilde{\textbf{E}} ) =0 \, .
\end{equation}
Now, fixed some event $p$, the values of $\tilde{\textbf{E}}$ and $\tilde{\textbf{B}}$ at $p$ may be arbitrary. Thus, if we want equation \eqref{gbuf} to hold for any choice of $\{ \tilde{\textbf{E}},\tilde{\textbf{B}} \}$, we must require that $\partial_t \textbf{E}-\nabla \times \textbf{B}$ and $\partial_t \textbf{B}+\nabla \times \textbf{E}$  be linear combinations of $\textbf{E}$ and $\textbf{B}$. Then, it is easy to see that the most general (rotationally-invariant) field equations that are consistent with \eqref{eormf} and \eqref{gbuf} are
\begin{equation}
\partial_t \textbf{E}-\nabla \times \textbf{B}= \mathcal{G}\, \textbf{B}  \, , \spc \partial_t \textbf{B}+\nabla \times \textbf{E}= -\mathcal{G} \, \textbf{E} \, ,
\end{equation}
where $\mathcal{G}$ is some constant. This a symmetric-hyperbolic system, in agreement with our theorem. As we can see, we could fully reconstruct Maxwell's equations from the condition $\partial_t \rho + \nabla \cdot \textbf{S}=0$, except for the antisymmetric coupling $\mathcal{G}$ (which corresponds to the matrix $\Xi_{[AB]}$, in our theorem). Nature has chosen $\mathcal{G}=0$.

\newpage

\section{Proof of Theorem 3}

\textbf{Theorem 3}. 
\textit{Let $\{\varphi^A,E^\mu,\sigma \}$ be a linear theory for which all the hypotheses of Theorem 1 hold, and $\Xi_{[AB]}=0$. If the  conservation laws $\partial_\mu ( \mathcal{N}^A_I E^\mu_{AB} \varphi^B)=0$ are all independent, then there is a one-to-one change of variables $\varphi^A \rightarrow \{ \mu^I,\Pi^a \}$ such that $E^\mu$, $j_I^\mu$, and $\sigma$ take the form (all matrices below are constant)}
\begin{equation}\label{decompose}
    E^\mu = \dfrac{1}{2} E^\mu_{IJ} \, \mu^I \mu^J + E^\mu_{Ib} \, \mu^I \Pi^b +\dfrac{1}{2} E^\mu_{ab} \, \Pi^a \Pi^b \, ,\qquad \qquad 
    j^\mu_I =  E^\mu_{IJ} \,  \mu^J + E^\mu_{Ib} \, \Pi^b  \, , \qquad \qquad
    \sigma = \sigma_{ab} \, \Pi^a \Pi^b \, ,  \\
\end{equation}
\textit{where $E^\mu_{IJ}$, $E^\mu_{ab}$, and $\sigma_{ab}$ are symmetric matrices. If $\mathcal{N}^A_I$ accounts for all conservation laws, then $\sigma_{ab}$ is invertible.} 
\begin{proof}
We can regard the collection of fields $\varphi^A$ as a map $\varphi^A : \mathbb{R}^{1+3}\rightarrow \mathbb{R}^{\mathfrak{D}}$, for some positive integer $\mathfrak{D}$. For a fixed value of the index $I$, we may also view $\mathcal{N}^A_I$ as an element of the vector space $\mathbb{R}^{\mathfrak{D}}$. If the conservation laws are all independent, the vectors $\mathcal{N}^A_I$ (for all $I$) form a linearly independent set. Therefore, we can complete them to a basis of $\mathbb{R}^{\mathfrak{D}}$, through the introduction of some other (constant) vectors $\mathcal{N}^A_a$. It follows that the fields $\varphi^A$ can always be expressed in the form $\varphi^A=\mathcal{N}^A_I \mu^I+\mathcal{N}^A_a \Pi^a$, where the linear combination coefficients $\{\mu^I, \Pi^a \}$ are hydrodynamic fields of their own right (which are smooth if and only if $\varphi^A$ is smooth). Since the expansion of a vector on a basis is necessarily unique, the change of variables $\varphi^A \rightarrow \{\mu^I, \Pi^a \}$ is one-to-one. Plugging $\varphi^A=\mathcal{N}^A_I \mu^I+\mathcal{N}^A_a \Pi^a$ into the constitutive relations
\begin{equation}
     E^\mu= \dfrac{1}{2} E^\mu_{AB} \varphi^A \varphi^B\,, \, \qquad  j_I^\mu = \mathcal{N}^A_I E^\mu_{AB} \varphi^B \, , \qquad 
 \sigma =  \sigma_{AB} \varphi^A \varphi^B \,, 
\end{equation}
we obtain equation \eqref{decompose}, where the symmetry of $E^\mu_{IJ}$, $E^\mu_{ab}$, and $\sigma_{ab}$ follows from the symmetry of $E^\mu_{AB}$ and $\sigma_{AB}$. Finally, suppose that there is a vector $V^a \neq 0$ such that $V^a\sigma_{ab}=0$. Then, one can show that $V^a\mathcal{N}^A_a\sigma_{AB}=0$ (use $\sigma_{ab}=\sigma_{AB}\mathcal{N}^A_a\mathcal{N}^B_b$, and $\sigma_{AB}\mathcal{N}^B_I=0$), i.e. the matrix $\mathcal{N}^A_I$ is ``missing'' one conservation law, which can be accounted for by extending $\mathcal{N}^A_I$ to include $V^a\mathcal{N}^A_a$ as an additional (linearly independent) column. 
\end{proof}

\newpage

\section{Atlas of Information Currents and entropy production rates}

Below, we provide the information currents and entropy production rates of the theories listed in the table present in the main text.

\noindent \textit{How to interpret this atlas -} Both $E^\mu$ and $\sigma$ have been rescaled by a constant factor $T$ (background temperature) for convenience. The fields $\varphi^A$ describe small displacements from equilibrium, which are marked with a variation symbol ``$\delta$'', e.g. $\delta T$, or $\delta u^k$.  The coefficients in front of them (i.e., all the quantities without ``$\delta$'') are background quantities, which must be treated as constants. In some information currents, we have introduced a ``chemical'' index $X,Y \in \{1,...,a\}$. Einstein's summation convention applies to all types of indices, including $X$ and $Y$. Chemical matrices, such as $P_{XY}$ and $\mathcal{K}^{XY}$, are all symmetric.

\noindent \textit{How to use this atlas -} With the aid of Theorem 1, one can fully reconstruct the linear field equations of each theory directly from the formulas of $TE^\mu$ and $T\sigma$. In fact, the system $E^\mu_{AB}\partial_\mu \varphi^A = -\sigma_{AB}\varphi^B$ is formally equivalent to \cite{GavassinoNonHydro2022}
\begin{equation}\label{systemreads}
    \partial_\mu \dfrac{\partial (TE^\mu)}{\partial \varphi^A} = -\dfrac{1}{2} \dfrac{\partial (T\sigma)}{\partial \varphi^A}.
\end{equation}
For example, take the diffusion of a non-conserved density, class $(1,1,0)-(0,0,0)$. Plugging \eqref{diffusuz} into \eqref{systemreads}, we obtain
\begin{equation}\label{systemreads33}
\begin{split}
    \partial_\mu \dfrac{\partial (TE^\mu)}{\partial (\delta \mu)} ={}& -\dfrac{1}{2} \dfrac{\partial (T\sigma)}{\partial (\delta \mu)} \spc \Longrightarrow \spc \partial_t (A \delta \mu) + \partial_j \delta j^j =-\Xi \delta \mu \\
    \partial_\mu \dfrac{\partial (TE^\mu)}{\partial (\delta j^k)} ={}& -\dfrac{1}{2} \dfrac{\partial (T\sigma)}{\partial (\delta j^k)} \spc \Longrightarrow \spc \partial_t (\mathcal{K} \delta j_k) + \partial_k \delta \mu =-\mathcal{R} \delta j_k\\
    \end{split}
\end{equation}
As one can see, the system \eqref{systemreads33} is symmetric by construction. Once the field equations have been obtained, it is possible to assign an interpretation to the fields and the transport coefficients based on their role in the dynamics. For example, in the system \eqref{systemreads33}, we can interpret $A\delta \mu$ as the density of a non-conserved charge, $\delta \mu$ as its chemical potential (which should vanish at equilibrium), $\delta j^k$ as its flux, and $\Xi$ as its production rate. The second equation of \eqref{systemreads33} is a Cattaneo-type relaxation equation for the flux $\delta j^k$, with relaxation time $\mathcal{K}/\mathcal{R}$ and diffusivity coefficient $\mathcal{R}^{-1}$.

\subsection*{Chemistry: ($a,0,0$)-($\leq a, 0,0$)}

\begin{equation}
\begin{split}
TE^0 ={}& \dfrac{1}{2} P_{XY} \, \delta \mu^X \delta \mu^Y \, ,  \\
TE^j ={}& 0 \, , \\
T\sigma ={}& \Xi_{XY} \, \delta \mu^X \delta \mu^Y \, , \\
\end{split}
\end{equation}

\subsection*{Fluid mixtures: ($a,1,0$)-($\leq a, 1,0$)}

\begin{equation}
\begin{split}
TE^0 ={}& \dfrac{1}{2} P_{XY} \, \delta \mu^X \delta \mu^Y + \dfrac{1}{2}(\rho {+} P)\delta u_k \delta u^k \, ,\\
TE^j ={}& n_X \delta \mu^X \delta u^j \, , \\
T\sigma ={}& \Xi_{XY} \, \delta \mu^X \delta \mu^Y \, ,\\
\end{split}
\end{equation}

\subsection*{Carter multifluids: ($a,a,0$)-($\leq a, \leq a,0$)}

\begin{equation}
\begin{split}
TE^0 ={}& \dfrac{1}{2} P_{XY} \, \delta \mu^X \delta \mu^Y + \dfrac{1}{2}\mathcal{K}^{XY}\delta j_{Xk} \delta  j_Y^k \\
TE^j ={}&  \delta \mu^X \delta j_X^j \\
T\sigma ={}& \Xi_{XY} \, \delta \mu^X \delta \mu^Y + \mathcal{R}^{XY} \delta j_{Xk} \delta  j_Y^k\\
\end{split}
\end{equation}

\subsection*{Diffusion of a non-conserved density: ($1,1,0$)-($0, 0,0$)}

\begin{equation}\label{diffusuz}
\begin{split}
TE^0 ={}& \dfrac{1}{2} A \, (\delta \mu)^2 + \dfrac{1}{2}\mathcal{K}\, \delta j_{k} \delta  j^k \\
TE^j ={}&  \delta \mu \, \delta j^j \\
T\sigma ={}& \Xi \, (\delta \mu)^2 + \mathcal{R} \, \delta j_{k} \delta  j^k\\
\end{split}
\end{equation}

\subsection*{Cattaneo model of diffusion: ($1,1,0$)-($1, 0,0$)}

\begin{equation}
\begin{split}
TE^0 ={}& \dfrac{1}{2} A \, (\delta \mu)^2 + \dfrac{1}{2}\mathcal{K}\, \delta j_{k} \delta  j^k \\
TE^j ={}&  \delta \mu \, \delta j^j \\
T\sigma ={}&  \mathcal{R} \, \delta j_{k} \delta  j^k\\
\end{split}
\end{equation}

\subsection*{Barotropic perfect fluid: ($1,1,0$)-($1, 1,0$)}

\begin{equation}
\begin{split}
TE^0 ={}& \dfrac{1}{2} \dfrac{n^2 (\delta \mu)^2 }{(\rho {+} P)c_s^2}+ \dfrac{1}{2} (\rho {+} P) \, \delta u_{k} \delta  u^k \\
TE^j ={}&  n\delta \mu \, \delta u^j \\
T\sigma ={}& 0\\
\end{split}
\end{equation}

\subsection*{Bulk viscous fluid at $\mu=0$: ($2,1,0$)-($1, 1,0$)}

\begin{equation}
\begin{split}
TE^0 ={}& \dfrac{1}{2} \bigg[ \dfrac{c_v}{T} (\delta T)^2 + b_0 (\delta \Pi)^2 + (\rho {+} P)\, \delta u_{k} \delta  u^k \bigg] \\
TE^j ={}&  (s \delta T +\delta \Pi) \delta u^j \\
T\sigma ={}& \dfrac{(\delta \Pi)^2}{\zeta} \\
\end{split}
\end{equation}

\subsection*{Perfect fluid: ($2,1,0$)-($2, 1,0$)}

\begin{equation}
\begin{split}
TE^0 ={}& \dfrac{1}{2} \bigg[ P_{TT} (\delta T)^2 +2 P_{T\mu} \delta T \delta \mu + P_{\mu \mu} (\delta \mu)^2 + (\rho {+} P)\, \delta u_{k} \delta  u^k \bigg] \\
TE^j ={}&  (s \delta T +n \delta \mu) \delta u^j \\
T\sigma ={}& 0 \\
\end{split}
\end{equation}

\subsection*{Coupled diffusion of two conserved densities: ($2,2,0$)-($2, 0,0$)}

\begin{equation}
\begin{split}
TE^0 ={}& \dfrac{1}{2} \bigg[ P_{TT} (\delta T)^2 +2 P_{T\mu} \delta T \delta \mu + P_{\mu \mu} (\delta \mu)^2 + \mathcal{K}^{ss} \delta s^k \delta s_{k} + 2\mathcal{K}^{sn} \delta s^k \delta n_{k} + \mathcal{K}^{nn} \delta n^k \delta n_{k} \bigg] \\
TE^j ={}&  \delta T \, \delta s^j + \delta \mu \, \delta n^j  \\
T\sigma ={}& \mathcal{R}_{ss} \delta s^k \delta s_{k} + 2\mathcal{R}_{sn} \delta s^k \delta n_{k} + \mathcal{R}_{nn} \delta n^k \delta n_{k} \\
\end{split}
\end{equation}

\subsection*{Heat conductive bulk viscous fluid at $\mu=0$: ($2,2,0$)-($1, 1,0$)}

\begin{equation}
\begin{split}
TE^0 ={}& \dfrac{1}{2} \bigg[ \dfrac{c_v}{T} (\delta T)^2 + b_0 (\delta \Pi)^2 + (\rho {+} P)\, \delta u_{k} \delta  u^k + 2\delta u_k \delta q^k + b_1 \delta q_k \delta q^k \bigg]  \\
TE^j ={}&  (s \delta T +\delta \Pi) \delta u^j + (T^{-1}\delta T -a_0 \delta \Pi) \delta q^j \\
T\sigma ={}& \dfrac{(\delta \Pi)^2}{\zeta} + \dfrac{\delta q_k \delta q^k}{\kappa T} \\
\end{split}
\end{equation}

\subsection*{Heat conductive fluid at $\mu \neq 0$: ($2,2,0$)-($2, 1,0$)}

\begin{equation}
\begin{split}
TE^0 ={}& \dfrac{1}{2} \bigg[ P_{TT} (\delta T)^2 +2 P_{T\mu} \delta T \delta \mu + P_{\mu \mu} (\delta \mu)^2 + (\rho {+} P)\, \delta u_{k} \delta  u^k + 2\delta u_k \delta q^k + b_1 \delta q_k \delta q^k \bigg]  \\
TE^j ={}&  (s \delta T +n \delta \mu) \delta u^j + T^{-1}\delta T \delta q^j \\
T\sigma ={}&  \dfrac{\delta q^k \delta q_k}{\kappa T} \\
\end{split}
\end{equation}

\subsection*{Relativistic superfluid: ($2,2,0$)-($2, 2,0$)}

\begin{equation}
\begin{split}
TE^0 ={}& \dfrac{1}{2} \bigg[ P_{TT} (\delta T)^2 +2 P_{T\mu} \delta T \delta \mu + P_{\mu \mu} (\delta \mu)^2 + \mathcal{K}^{ss} \delta s^k \delta s_{k} + 2\mathcal{K}^{sn} \delta s^k \delta n_{k} + \mathcal{K}^{nn} \delta n^k \delta n_{k} \bigg] \\
TE^j ={}&  \delta T \, \delta s^j + \delta \mu \, \delta n^j  \\
T\sigma ={}&0\\
\end{split}
\end{equation}

\subsection*{Maxwell material at $\mu=0$: ($1,1,1$)-($1, 1,0$)}

\begin{equation}
\begin{split}
TE^0 ={}& \dfrac{1}{2} \bigg[  \dfrac{c_v}{T} (\delta T)^2+ (\rho {+} P)\, \delta u_{k} \delta  u^k + b_2 \delta \Pi^{jk} \delta \Pi_{jk} \bigg] \\
TE^j ={}& s \delta T \delta u^j + \delta \Pi^{jk} \delta u_k \\
T\sigma ={}& \dfrac{\delta \Pi^{jk} \delta \Pi_{jk}}{2\eta} \\
\end{split}
\end{equation}

\subsection*{Elastic material at $\mu=0$: ($1,1,1$)-($1, 1,1$)}

\begin{equation}
\begin{split}
TE^0 ={}& \dfrac{1}{2} \bigg[  \dfrac{c_v}{T} (\delta T)^2+ (\rho {+} P)\, \delta u_{k} \delta  u^k + \dfrac{ \delta \Pi^{jk} \delta \Pi_{jk}}{2G} \bigg] \\
TE^j ={}& s \delta T \delta u^j + \delta \Pi^{jk} \delta u_k \\
T\sigma ={}& 0 \\
\end{split}
\end{equation}

\subsection*{Burgers material at $\mu=0$: ($1,1,2$)-($1, 1,0$)}

\begin{equation}
\begin{split}
TE^0 ={}& \dfrac{1}{2} \bigg[  \dfrac{c_v}{T} (\delta T)^2+ (\rho {+} P)\, \delta u_{k} \delta  u^k + b_2 \delta \Pi^{jk} \delta \Pi_{jk}+b_2 \delta \Lambda^{jk} \delta \Lambda_{jk}\bigg] \\
TE^j ={}& s \delta T \delta u^j + \delta \Pi^{jk} \delta u_k \\
T\sigma ={}& \xi_1 \delta \Pi^{kl}\delta \Pi_{kl}+2\xi_2 \delta \Pi^{kl}\delta \Lambda_{kl}+\xi_3\delta \Lambda^{kl}\delta \Lambda_{kl} \\
\end{split}
\end{equation}

\subsection*{Israel-Stewart theory in a general frame at $\mu =0$: ($3,2,1$)-($1, 1,0$)}

\begin{equation}
\begin{split}
TE^0 ={}& \dfrac{1}{2} \bigg[  \dfrac{c_v}{T} (\delta T)^2+ 2\dfrac{\delta \mathcal{A} \delta T}{T}+b_{1} (\delta \mathcal{A})^2+2b_2 \delta \Pi \delta \mathcal{A}+ b_3 (\delta \Pi)^2+(\rho {+} P)\, \delta u^{k} \delta  u_k + 2 \delta u^k \delta q_k+ b_4 \delta q^k \delta q_k + b_5 \delta \Pi^{jk} \delta \Pi_{jk} \bigg] \\
TE^j ={}& (s \delta T +\delta \Pi)\delta u^j  +(T^{-1} \delta T -a_{1} \delta \mathcal{A}- a_2 \delta \Pi )\delta q^j + (\delta u_k -a_3 \delta q_k)\delta \Pi^{jk}  \\
T\sigma ={}& \xi_1 (\delta \mathcal{A})^2 + 2\xi_2 \, \delta \mathcal{A} \delta \Pi +\xi_3 (\delta \Pi)^2 + \dfrac{\delta q^k \delta q_k}{\kappa T} +\dfrac{\delta \Pi^{jk} \delta \Pi_{jk}}{2\eta} \\
\end{split}
\end{equation}

\subsection*{Israel-Stewart theory at $\mu \neq 0$: ($3,2,1$)-($2, 1,0$)}

\begin{equation}
\begin{split}
TE^0={}& \dfrac{1}{2} \bigg[P_{TT} (\delta T)^2 +2 P_{T\mu} \delta T \delta \mu + P_{\mu \mu} (\delta \mu)^2  + (\rho{+}P) \delta u^k \delta u_k + 2 \delta u^k \delta q_k + b_1 \delta q^k \delta q_k +b_0 (\delta \Pi)^2 + b_2 \delta \Pi^{jk} \delta \Pi_{jk} \bigg] \\
TE^j={}& (s\delta T+ n\delta \mu +\delta \Pi)\delta u^j +(T^{-1} \delta T -a_0 \delta \Pi)\delta q^j + ( \delta u_k -a_1 \delta q_k)\delta \Pi^{jk}  \\
T\sigma ={}&  \dfrac{(\delta \Pi)^2}{\zeta} + \dfrac{\delta q^k \delta q_k}{\kappa T} +\dfrac{\delta \Pi^{jk} \delta \Pi_{jk}}{2\eta}  \\
\end{split}
\end{equation}

\subsection*{Elastic heat conducting material: ($3,2,1$)-($3, 1,1$)}
\begin{equation}
\begin{split}
TE^0={}& \dfrac{1}{2} \bigg[P_{TT} (\delta T)^2 +2 P_{T\mu} \delta T \delta \mu + P_{\mu \mu} (\delta \mu)^2   + (\rho{+}P) \delta u^k \delta u_k + 2 \delta u^k \delta q_k + b_1 \delta q^k \delta q_k + \dfrac{(\delta \Pi)^2}{K}+ \dfrac{\delta \Pi^{jk} \delta \Pi_{jk}}{2G} \bigg]   \\
TE^j={}& (s\delta T +n\delta \mu+ \delta \Pi)\delta u^j +(T^{-1} \delta T -a_0 \delta \Pi)\delta q^j + ( \delta u_k -a_1 \delta q_k)\delta \Pi^{jk}   \\
T\sigma ={}&  \dfrac{\delta q^k \delta q_k}{\kappa T}  \\
\end{split}
\end{equation}

\subsection*{Isotropic supersolid\footnote{Note that in the Andreev-Lifshitz theory for supersolid hydrodynamics \cite{Andreev1969,Sears2010} the coefficients $\mathcal{Z}_1$ and $\mathcal{Z}_2$ are set to zero. Here, to keep the class ``universal'', we have included them, because they are allowed by symmetry.}: ($3,2,1$)-($3, 2,1$)}

\begin{equation}\label{infosupsol}
\begin{split}
TE^0 ={}& \dfrac{1}{2} \bigg[P_{TT} (\delta T)^2 +2 P_{T\mu} \delta T \delta \mu + P_{\mu \mu} (\delta \mu)^2   + \mathcal{K}^{ss}\delta s^k \delta s_k + 2\mathcal{K}^{sn}\delta s^k \delta n_k + \mathcal{K}^{nn}\delta n^k \delta n_k + \dfrac{(\delta \Pi)^2}{K} {+} \dfrac{\delta \Pi^{kl}\delta \Pi_{kl}}{2G} \bigg]   \\
TE^j ={}& \delta T \delta s^j +\delta \mu \delta n^j + (s^{-1} \delta s^j-\mathcal{Z}_1 \delta n^j) \delta \Pi  + (s^{-1} \delta s_k -\mathcal{Z}_2 \delta n_k) \delta \Pi^{jk}    \\
T\sigma ={}& 0  \\
\end{split}
\end{equation}

\label{lastpage}

\end{document}